\newif\ifextended
\newcommand{\et}{\wedge}
\newcommand{\biget}{\bigwedge}
\newcommand{\then}{\Rightarrow}
\newcommand{\weakctl}{\mathcal{W}}
\newcommand{\untilctl}{\mathcal{U}}
\newcommand{\glob}{\mathcal{G}}
\newcommand{\future}{\mathcal{F}}
\newcommand{\nat}{\mathbb{N}}
\newcommand{\rat}{\mathbb{Q}}
\newcommand{\real}{\mathbb{R}}
\author{Luca Spalazzi \and Francesco Spegni \\
\institute{DII - Universit\`a Politecnica delle Marche \\
Ancona, Italy}
\email{\{spalazzi,spegni\}@dii.univpm.it}
}
\title{Parameterized Model-Checking of Timed Systems with Conjunctive Guards \\ (Extended Version)}
\title{Parameterized Model-Checking of Timed Systems with Conjunctive Guards}
\begin{document}

\maketitle

\begin{abstract}
In this work we extend the Emerson and Kahlon's cutoff theorems for process 
skeletons with conjunctive guards to Parameterized Networks of Timed Automata, 
i.e. systems obtained by an \emph{apriori} unknown number of Timed Automata instantiated 
from a finite set $U_1, \dots, U_n$ of Timed Automata templates. In this way we
aim at giving a tool to universally verify software systems where an unknown 
number of software components (i.e. processes) interact with continuous time 
temporal constraints. It is often the case, indeed, that distributed algorithms 
show an heterogeneous nature, combining dynamic aspects with real-time aspects.
In the paper we will also show how to model check a protocol that uses special
variables storing identifiers of the participating processes (i.e. PIDs) in
Timed Automata with conjunctive guards. This is non-trivial, since solutions to
the parameterized verification problem often relies on the processes to be
symmetric, i.e. indistinguishable. On the other side, many popular distributed 
algorithms make use of PIDs and thus cannot directly apply those solutions.
\end{abstract}

\section{Introduction}

Software model-checking emerged as a natural evolution of applying model checking
to verify hardware systems. 
Some factors, among several ones, that still make software model checking 
challenging are: the inherently dynamic nature of software components, the 
heterogeneous nature of software systems and the relatively limited amount of 
modular tools (both theoretical and practical) for verifying generic software 
systems. 

Software systems definable as an arbitrary number of identical copies of some
process template, are called parameterized systems, and are an example of 
infinite state 
systems \cite{Emerson1998}. Sometimes the nature of a software system is 
heterogeneous, meaning that it combines several ``characteristics'' (e.g. a clock
synchronization algorithm is supposed to work with an arbitrary number of processes
but also to terminate within a certain time). The
scarcity of modular tools is witnessed by the fact that almost everyone trying to
model check a software system, has to build his/her own toolchain that applies
several intermediate steps (usually translations and abstractions) before building
a model that can be actually model checked.

Despite such obstacles, several industries already apply model checking as part
of their software design and/or software testing stages. (e.g., 
Microsoft \cite{ball2011}, NASA \cite{Mansouri2008}, 
Bell Labs.\cite{godefroid2005}, IBM \cite{ben2003},
UP4ALL\footnote{http://www.uppaal.com/index.php?sida=203\&rubrik=92 URL visited on April '14}). In the aerospace industry, the DO178C international standard
\cite{RTCA11a}
even consider software model checking (or more generally, software 
verification) an alternative to software testing, under suitable assumptions.

The core of our work is an extension of the Emerson and Kahlon's Cutoff 
Theorem \cite{Emerson2000} to \emph{parameterized and timed systems}.
Assuming a parameterized system based on Timed Automata $U_1, \dots, U_m$ that
synchronize using conjunctive Boolean guards,
the cutoff theorem allows to compute a list of positive numbers 
$(c_1, \dots, c_m)$ such that, let $\phi$ be a given specification,
then:
\[
\begin{array}{l}
\forall i \in [1,m] . (\forall n_i \in [0,\infty) ~ . ~
    (U_1,\dots,U_m)^{(n_1, \dots, n_m)} \models \phi ~ \textit{iff} \\
\hspace{2cm} \forall n_i \in [0,c_i] ~ . ~ 
    (U_1,\dots,U_m)^{(n_1, \dots, n_m)} \models \phi)
\end{array}
\]  
Intuitively, the proof shows that the cutoff configuration is 
\emph{trace equivalent} to each ``bigger'' system.

The contribution of this work is multifold, w.r.t. the aforementioned factors: it
reduces the problem of model checking an \emph{infinite state} real-time software 
system to model checking a finite number of finite state systems; it shows a 
concrete example of how to combine verification algorithms from distinct domains, 
to verify what we call a \emph{heterogeneous} software systems; the cutoff 
theorem for real-time systems is a theoretical tool that can be applied as a 
first step when verifying a parameterized and real-time algorithm. 
A second contribution is methodological: this paper describes how 
to exploit the cutoff theorem to model variables that store process identifiers 
(PIDs) of processes participating to the distributed algorithm. 
This is non trivial, since the former relies on the fact that processes should be symmetrical, thus indistinguishable. 
In order to show this, we will use a popular benchmark protocol, viz. the 
Fischer's protocol for mutual exclusion. To the best of our knowledge, this is 
the first time that the Fischer's protocol has been verified using model checking 
techniques, for an \emph{apriori} unknown number of processes.

\section{Related Work}
\label{sec:rw}

\textbf{Infinite State System.}
Timed Automata and Parameterized Systems are two examples of infinite 
state systems \cite{Emerson1998}. In general, the problem of
model checking infinite state systems is undecidable \cite{Krzysztof1986}. 
A classic approach to overcome this limitation, is to find
suitable subsets of infinite state systems that can be reduced to model checking
of finitely many finite state systems, e.g. identifying a precise abstraction
(e.g. clock-zones for Timed Automata \cite{Bengtsson2004}).
Other approaches are based on the idea of finding a finite-state abstraction 
that is correct but not complete, such that a property verified for the abstract 
system holds for the original system as well 
\cite{Clarke1986,Zuck04a,German92a,Spegni2014}. 
Some other approaches are based on the idea of building an invariant representing 
the common behaviors exhibited by the system \cite{Kurshan89a}. 
When a given relation over the invariant is satisfied, then the desired property 
is satisfied by the original system.
Its limitation is that building the abstraction or the invariant is usually not 
automatic. \\
\textbf{Cutoffs for Parameterized Systems.}
Concerning the use of cutoff for model checking parameterized systems, there 
exists two main approaches: computing the cutoff number of process replications 
or the cutoff length of paths.
The former consists in finding a finite number of process instances such that 
if they satisfy a property then the same property is satisfied by an arbitrary 
number of such processes.
Emerson and Kahlon \cite{Emerson2000} established a cutoff value of about the 
number of template states, for a clique of interconnected process 
skeletons. In the case of rings, a constant between 2 and 5 is enough
\cite{Emerson03}. For shared resources management algorithms 
\cite{Bouajjani08a}, the cutoff value is the number of resources plus the 
quantified processes (in the decidable fragment of processes with equal priority).
Other works proved that one process per template is enough, for certain grids 
\cite{Pagliarecci2011}. 
Recently, in \cite{Rubin2014a} it has been showed that certain parameterized 
systems may admit a cutoff which is not computable, while Hanna \emph{et al.} 
\cite{Hanna10} proposed a procedure to compute a cutoff for Input-Output Automata 
that is independent of the communication topology.
On the other hand, computing the cutoff length of paths 
of a parameterized system consists in finding an upper bound on the number 
of nodes in its longest computation path. 
When a property is satisfied within the bounded path, then the property holds for 
a system with unbound paths, i.e., with an arbitrary number of process instances.
The classic work from German and Sistla \cite{German92a}, Emerson and 
Namjoshi \cite{Emerson96a} proved that such a cutoff exists for the verification of
parameterized systems composed of a control process and
an arbitrary number of user processes against indexed \textsc{ltl}
properties.
Yang and Li \cite{Yang10a} proposed a sound and complete method to compute such 
a cutoff for parameterized systems with only rendezvous actions. 
In that work, the property itself is represented as an automaton.
Lately it has been also showed that parameterized systems on pairwise rendezvous
do not admit, in general, a cutoff \cite{Spegni2014}.
To the best of our knowledge, cutoff theorems have not been stated previously for 
timed systems. Surprisingly enough, extending Emerson and Kahlon cutoff theorems
\cite{Emerson2000} to timed systems does not increase the cutoff value.\\
\textbf{Parameterized Networks of Timed or Hybrid Automata.}
The realm of real-time systems (timed automata and, more in general, hybrid 
automata) with a finite but unknown number of instances has been explored.
Abdulla and Jonsson \cite{Jonsson} proposed in their seminal work to 
reduce safety properties to reachability properties. 
They worked with a network composed by an arbitrary set of identical timed 
automata controlled by a controller (i.e. a finite timed automaton as well).
Abdulla et al. show also that checking safety properties in networks of timed 
automata
with multiple clocks is an undecidable problem \cite{Abdulla04}, as well as the
problem of determining if a state is visited infinitely often, in the continuous 
time model (in the discrete time model, instead, it is decidable) 
\cite{Abdulla2003}.
It should be remarked that in their undecidability proof, the network of timed
automata must rely on synchronous rendezvous in order to prove the undecidability
results. This motivated us to explore timed automata with different synchronization
mechanisms in this work.
Ghilardi et al. \cite{Carioni2010}, reduced model checking
safety properties to reachability problem.
Similarly to Abdulla and Jonsson, they applied their approach to networks 
composed by an arbitrary set of timed automata interacting with a controller.
Their original contribution consisted in the usage of
Satisfiability Modulo Theories techniques.
G\"{o}thel and Glesner \cite{Gothel10a} proposed a semi-automatic verification 
methodology based on finding network invariants and using both theorem proving 
and model checking.
Along the same line, Johnson and Mitra \cite{johnson12} proposed a 
semi-automatic verification of safety properties for parameterized networks of 
hybrid automata with rectangular dynamics.
They based their approach on a combination of invariant synthesis and 
inductive invariant proving. Their main limitation is that specifications 
are often not inductive properties (e.g. the
mutual exclusion property it is not an inductive property). In this case one
must show that a set of inductive invariants can imply the desired property. 
This last step is often not fully automatic.

We consider systems composed of a finite number of templates, each of which can 
be instantiated an arbitrary number of times.
We limit Timed Automata to synchronize using Conjunctive Guards, instead of the 
classic Pairwise Rendezvous \cite{Bengtsson2004}, 
because, as already mentioned, parameterized systems with pairwise rendezvous do 
not admit, in general, a cutoff \cite{Spegni2014}.
Finally, the verification proposed in this paper is completely automatic.

\section{Parameterized Networks of Timed Automata}
\label{sec:theoretical-aspects}
\label{ssec:pnta}

This work introduces Parameterized Networks of Timed Automata (PNTA),
an extensions of Timed Automata that synchronize using conjunctive Boolean
guards.
We also introduce Indexed-Timed CTL$^\star$, a temporal logic that integrates 
TCTL and MTL \cite{Bouyer2009}, for reasoning about timed processes,
together with Indexed-CTL$^\star\setminus$X \cite{Emerson2000}, for reasoning 
about parametric networks of processes. In the following definition we will 
make use of a set of \emph{temporal constraints} $TC(C_l)$, defined as:
\[
\begin{array}{rclclclclcl}
\textit{TC}(C) &::=& \top ~ | ~ \neg~ TC(C) ~ | ~ TC(C) ~\vee~ TC(C) ~ | \\
                 && C ~ \sim ~ C ~ | ~ C ~\sim~ \rat^{\geq 0} \\ 

\end{array}
\]
where $\sim ~ \in ~ \{ <, \leq, >, \geq, = \}$, $C$ is a set of clock variables and $\rat$ denotes the set of 
rational numbers.

\begin{definition}[\bf Timed Automaton Template]
\label{def:ta}
A Timed Automaton (TA) Template $U_l$ is a tuple 
$\langle S_l, \hat{s}_l, C_l, \Gamma_l, \tau_l, I_l\rangle$ where:
\begin{itemize}
  \item $S_l$ is a finite set of states, or locations;
  \item $\hat{s}_l \in S_l$ is a distinguished initial state;
  \item $C_l$ is a finite set of clock variables;
  \item $\Gamma_l$ is a finite set of Boolean guards built upon $S_l$;
  \item $\tau_l \subseteq S_l \times TC(C_l) \times 2^{C_l} \times \Gamma_l \times S_l$ 
        is a finite set of transitions;
  \item $I_l : S_l \rightarrow TC(C_l)$ maps a state to an invariant, such that $I_l(\hat{s}_l)=\top$;
\end{itemize}
\end{definition}
We will denote with $|U_l| = |S_l|$ the size of the timed automaton. A network of 
timed automata can be defined as a set of $k$ TA templates, where each TA
template (say $U_l$) is instantiated an arbitrary number (say $n_l$) of times. 
\begin{definition}[\bf PNTA]
\label{def:pnta}
Let $(U_1, \dots, U_k)$ be a set of Timed Automaton templates.
Let $(n_1, \dots, n_k)$ be a set of natural numbers.
Then
$$
(U_1, \dots, U_k)^{(n_1, \dots, n_k)}
$$
is a Parameterized Network of Timed Automata denoting the asynchronous parallel
composition of timed automata $U_1^1 || \dots || U_1^{n_1} || \dots || U_k^1 || \dots || U_k^{n_k}$, such that for each $l \in [1,k]$ and $i \in [1,n_l]$, then
$U_l^i$ is the \emph{i-th copy} of $U_l$.
\end{definition}
Let us remark that every component of $U_l^i$ is a disjoint copy of the 
corresponding template component.
In the following will be described how every process $U_l^i$, also called 
instance, can take a local step after having checked that the neighbors' states 
satisfy the transition (conjunctive) Boolean guard. In such system a process can 
check it is ``safe'' to take a local step, but it cannot induce a move on a 
different instance.
A PNTA based on conjunctive guards is defined as follows.
\begin{definition}[\bf PNTA with Conjunctive Guards]
\label{def:cg-pnta}
Let $(U_1, \dots, U_k)^{(n_1, \dots, n_k)}$ be a PNTA.
Then, it is a PNTA with Conjunctive Guards iff every $\gamma \in \Gamma_l^i$ is a
Boolean expression with the following form:
\[
	\bigwedge_{\substack{m \in [1,n_1] \\ m \neq i}} (\hat{s}_l(m) \vee s_l^1(m) \vee \dots \vee s_l^p(m)) ~ \wedge ~ 
	\bigwedge_{\substack{h \in [1,k] \\ h \neq l}}(\bigwedge_{j \in [1,n_j]} (\hat{s}_h(j) \vee s_h^1(j) \vee \dots \vee s_h^q(j)))
\]
where, for all $l \in [1,k]$, $i \in [1,n_l]$ and $p > 0$, 
$\{ s_l^1, \dots, s_l^p \} \subseteq S_l$, $s_l(i) \in S_l^i$ and 
$\hat{s}_l$ is the initial states of $U_l$. The initial states $\hat{s}_l(m)$ and
$\hat{s}_h(j)$ must be present.
\end{definition}
We remark that our definitions of Timed Automaton template, PNTA and PNTA with 
Conjunctive Guards are variants of the notion of \emph{timed automata} and 
\emph{networks of timed automata} found in literature (e.g. \cite{Bengtsson2004}).

The \emph{operational semantics} of PNTA with conjunctive guards is expressed as 
a transition system over \emph{PNTA configurations}.
\begin{definition}[\bf PNTA Configuration]
\label{def:abs-conf}
~~ \\
Let $(U_1, \dots, U_k)^{(n_1, \dots, n_k)}$ be a PNTA.
Then a \emph{configuration} is a tuple:
$$
\mathfrak{c} = (\langle \overline{s}_1, \overline{u}_1 \rangle, \dots, \langle \overline{s}_k, \overline{u}_k \rangle)
$$
where, for each $l \in [1,k]$:
\begin{itemize}
\item $\overline{s}_l : [1,n_l] \to S_l$ 
    maps an instance to its current state, and
\item $\overline{u}_l : [1,n_l] \to (C_l \to \real^{\geq 0})$, 
    maps an instance to its clock function, s.t.
	\begin{equation}
	\label{eq:invariant}
	   \forall i ~ . ~ \overline{u}_l(i) \models I_l^i(\overline{s}_l(i))
	\end{equation}
\end{itemize}
$\mathfrak{C}$ is the set of all the configurations.
\end{definition}
Intuitively, let $(\dots, \langle \overline{s}_l, \overline{u}_l \rangle, \dots)$ 
be a configuration, then $\overline{s}_l(i)\in S_l$ denotes the state where 
instance $U_l^i$ is in that configuration. 
$\overline{u}_l(i)$ is the clock assignment function (i.e., 
$\overline{u}_l(i): C_l \to \real^{\geq 0}$) of instance $U_l^i$ in that 
configuration. 
In other words, for each $c\in C_l$, $\overline{u}_l(i)(c)$ is the current value 
that the clock variable $c$ assumes for instance $U_l^i$.
Any assignment to such clock variables must satisfy the invariant for the 
corresponding state (see Eqn. (\ref{eq:invariant})).
The notion of transition requires some auxiliary notations.
Let $l\in[1,k]$, and let $i\in[1,n_l]$, then we call:
\begin{itemize}
  \item \emph{initial configuration} \\
        $\hat{\mathfrak{c}} \in \mathfrak{C}$ such that,
        for each $l\in[1,k]$, for each $i\in[1,n_k]$: \\
        \hspace*{1.3cm} $\overline{s}_l(i)=\hat{s}_l^i$, and \\
        \hspace*{1.3cm} $\forall c \in C_l$, $\overline{u}_l(i)(c) = 0$.
  \item \emph{projection} \\
        $\forall \mathfrak{c} = 
                 (\langle \overline{s}_1, \overline{u}_1 \rangle, \dots, 
                  \langle \overline{s}_l, \overline{u}_l \rangle, \dots, 
	              \langle \overline{s}_k, \overline{u}_k \rangle) \in \mathfrak{C}$, \\
        \hspace*{1.3cm} $\mathfrak{c}(l)   = \langle \overline{s}_l,    \overline{u}_l    \rangle$, and \\
        \hspace*{1.3cm} $\mathfrak{c}(l,i) = \langle \overline{s}_l(i), \overline{u}_l(i) \rangle$.
  \item \emph{state-component} \\ 
        $\forall \mathfrak{c} = 
                 (\langle \overline{s}_1, \overline{u}_1 \rangle, \dots, 
                  \langle \overline{s}_l, \overline{u}_l \rangle, \dots, 
	              \langle \overline{s}_k, \overline{u}_k \rangle) \in \mathfrak{C}$, \\
        \hspace*{1.3cm} $\textit{state}(\mathfrak{c})      = (\overline{s}_1, \dots,\overline{s}_l, \dots,\overline{s}_k)$, \\
        \hspace*{1.3cm} $\textit{state}(\mathfrak{c}(l))   = \overline{s}_l$, and \\
        \hspace*{1.3cm} $\textit{state}(\mathfrak{c}(l,i)) = \overline{s}_l(i)$.
  \item \emph{clock-component} \\ 
        $\forall \mathfrak{c} = 
                 (\langle \overline{s}_1, \overline{u}_1 \rangle, \dots, 
                  \langle \overline{s}_l, \overline{u}_l \rangle, \dots, 
	              \langle \overline{s}_k, \overline{u}_k \rangle) \in \mathfrak{C}$, 
	    $\forall c \in C_l$,\\
        \hspace*{1.3cm} $\textit{clock}(\mathfrak{c})         = (\overline{u}_1, \dots,\overline{u}_l, \dots,\overline{u}_k)$, \\
        \hspace*{1.3cm} $\textit{clock}(\mathfrak{c}(l))      = \overline{u}_l$, \\
        \hspace*{1.3cm} $\textit{clock}(\mathfrak{c}(l,i))    = \overline{u}_l(i)$, thus \\
        \hspace*{1.3cm} $\textit{clock}(\mathfrak{c}(l,i))(c) = \overline{u}_l(i)(c)$. 
  \item \emph{time increase} \\
  		$\forall c \in C_l . \forall d \in \real^{\geq 0} . 
		         (\overline{u}_l + d)(i)(c) = \overline{u}_l(i)(c) + d$ \\
        \hspace*{1.3cm} $(\textit{clock}(\mathfrak{c})+d)      = (\overline{u}_1+d, \dots,\overline{u}_l+d, \dots,\overline{u}_k+d)$, \\
        \hspace*{1.3cm} $(\textit{clock}(\mathfrak{c}(l))+d)   = (\overline{u}_l+d)$, and \\
        \hspace*{1.3cm} $(\textit{clock}(\mathfrak{c}(l,i))+d) = (\overline{u}_l+d)(i)$.
  \item \emph{clock reset} \\
	    $\forall c \in C_l ~ . ~ \forall r \subseteq C_l ~ . ~ \forall j ~ . $ \\
	    \hspace*{1.3cm}
 	    $\overline{u}_l[(i,r) \mapsto 0](j)(c) = \left\{
	        \begin{array}{ll}
	  	      0                    & \textrm{if } i = j ~ \textit{and} ~ c \in r \\
	  	      \overline{u}_l(j)(c) & \textrm{otherwise}
	        \end{array}
	     \right.$ 
       
  \item \emph{clock constraint evaluation} \\
        $\overline{u}_l(i) \models g$ \emph{iff} 
        the clock values of instance $U_l^i$ denoted by $\overline{u}_l(i)$
        satisfy the clock constraint $g$; the semantics $\models$ is defined as 
        usual by induction on the structure of $g$;
  \item \emph{guard evaluation} \\
	    $\textit{state}(\mathfrak{c}) \models \gamma$ \emph{iff} 
	    the set of states of $(U_1, \dots, U_k)^{(n_1, \dots, n_k)}$ 
	    denoted by $\textit{state}(\mathfrak{c})$
	    satisfies the Boolean guard $\gamma$; this predicate as well can be
        defined by induction on the structure of $\gamma$.
\end{itemize}

\begin{definition}[\bf PNTA Transitions]
\label{def:sem-op}
The transitions among PNTA configurations are governed by the following rules:
\vspace{0.25cm}
\\
$
\textit{(delay)} \\
\hspace*{1.0cm}   \mathfrak{c} \xrightarrow{d} \mathfrak{c}^\prime \hspace{1cm} 
\textrm{if} \hspace{0.3cm} d \in \real^{\geq 0} \\
\hspace*{3.5cm}     \textit{state}(\mathfrak{c}^\prime) = \textit{state}(\mathfrak{c}) \\
\hspace*{3.5cm}     \textit{clock}(\mathfrak{c}^\prime) = (\textit{clock}(\mathfrak{c})+d) \\
\hspace*{3.5cm}     \forall l, i, d^\prime \in [0,d] . \textit{clock}(\mathfrak{c}(l,i))+d^\prime \models I_l^i(\textit{state}(\mathfrak{c}(l,i)))
\vspace{0.25cm}
\\
\textit{(synchronization)} \\
\hspace*{1.0cm}   \mathfrak{c} \xrightarrow{\gamma} \mathfrak{c}^\prime \hspace{0.5cm} \textrm{if} \hspace{0.3cm} \exists l\in[1,k] ~ . \exists i\in[1,n_l] ~ :\\
\hspace*{3.5cm}       s \xrightarrow{g,r,\gamma} t \in \tau_l^i. \\
\hspace*{3.5cm}       \textit{state}(\mathfrak{c}(l,i)) = s, \\
\hspace*{3.5cm}       \textit{clock}(\mathfrak{c}(l,i)) \models g, \\
\hspace*{3.5cm}       \textit{state}(\mathfrak{c}) \models \gamma, \\
\hspace*{3.5cm}       \mathfrak{c}^\prime(h) = \mathfrak{c}(h) ~ \textrm{for each} ~ h \neq l, \\
\hspace*{3.5cm}       \mathfrak{c}^\prime(l,j) = \mathfrak{c}(l,j) ~ \textrm{for each} ~ j \neq i, \\
\hspace*{3.5cm}       \mathfrak{c}^\prime(l,i) = \langle t , \textit{clock}(\mathfrak{c}(l,i))[ (r,i) \mapsto 0] \rangle \\
\hspace*{3.5cm}       \textit{clock}(\mathfrak{c}^\prime(l,i)) \models I_l^i(\textit{state}(\mathfrak{c}^\prime(l,i)) \\
$
\end{definition}
%
Let us define what is a \emph{timed-computation} for PNTA.
\begin{definition}[\bf Timed Computation]
\label{def:timed-computation}
Let $\hat{\mathfrak{c}_0}$ be an \emph{initial configuration}, a 
\emph{timed-computation} $x$ is a finite or infinite sequence of pairs:
$$
x = (\mathfrak{c}_0, t_0) \dots (\mathfrak{c}_v, t_v) \dots  
$$
s.t. $t_0 = 0$ and 
$\forall v \geq 0 ~ . ~ (\exists d > 0 ~ . ~ \mathfrak{c}_v \xrightarrow{d} \mathfrak{c}_{v+1} ~ \wedge ~ t_{v+1} = t_v + d) ~ \vee ~ (\exists \gamma ~ . ~ \mathfrak{c}_v \xrightarrow{\gamma} \mathfrak{c}_{v+1} ~ \wedge ~ t_{v+1} = t_v)$
\end{definition}

In other words, a timed computation can be seen as a sequence of \emph{snapshots} 
of the transition system configurations taken at successive times.
It should be noticed that, according to Emerson and Kahlon \cite{Emerson2000}, 
in this work, it has been adopted the so-called \emph{interleaving semantics}.
This means that in a transition between two configurations, only one instance 
can change its state (see the \emph{synchronization} rule in 
Def. \ref{def:sem-op}).
For the sake of conciseness, let us extend the notion of 
\emph{projection}, \emph{state-component}, and \emph{clock-component} to timed computations.
Let $x = (\mathfrak{c}_0, t_0) \dots (\mathfrak{c}_v, t_v) \dots$ be a timed computation, let $x_v = ( \mathfrak{c}_v   , t_v )$ be the $v$-th element of $x$, then
\[
\begin{array}{lllclll}

x(l)                     & = & ( \mathfrak{c}_0(l)   , t_0 ) \dots ( \mathfrak{c}_v(l)   , t_v ) \dots &&   
    \textit{clock}(x_v)      & = & \textit{clock}( \mathfrak{c}_v      ) \\

x(l,i)                   & = & ( \mathfrak{c}_0(l,i) , t_0 ) \dots ( \mathfrak{c}_v(l,i) , t_v ) \dots &&  
    \textit{clock}(x_v(l))   & = & \textit{clock}( \mathfrak{c}_v(l)   ) \\

x_v(l)                   & = & ( \mathfrak{c}_v(l)   , t_v ) && 
    \textit{clock}(x_v(l,i)) & = & \textit{clock}( \mathfrak{c}_v(l,i) ) \\
x_v(l,i)                 & = & ( \mathfrak{c}_v(l,i) , t_v ) && \\
\textit{state}(x_v)      & = & \textit{state}( \mathfrak{c}_v      ) &&
      \textit{time}(x_v)       & = & t_v  \\
\textit{state}(x_v(l))   & = & \textit{state}( \mathfrak{c}_v(l)   ) &&
     \textit{time}(x_v(l))    & = & t_v  \\ 
\textit{state}(x_v(l,i)) & = & \textit{state}( \mathfrak{c}_v(l,i) ) &&
     \textit{time}(x_v(l,i))  & = & t_v  \\
\\

\end{array}
\]
$x(l,i)$ is called the \emph{local computation} of the $i$-th instance of automaton template $l$.
$\textit{time}(x_v)$, $\textit{time}(x_v(l))$, and $\textit{time}(x_v(l,i))$ 
are the \emph{time-components} of $x_v$, $x_v(l)$, and $x_v(l,i)$ respectively.  

\begin{definition}[\bf Idle Local Computation]
\label{def:idle-computation}
Let $U_l^i = \langle S_l^i, \hat{s}_l^i, C_l^i, \tau_l^i, I_l^i \rangle$ be the $i$-th instance of the \emph{timed automaton template} $U_l$.
An \emph{idle local computation} $\hat{\mathfrak{s}}(l,i)$ is a timed local computation such that, for all $v \geq 0$:
\begin{eqnarray*}
\hat{\mathfrak{s}}(l,i) & = &
( \langle\hat{s}_l^i,\overline{u}_l(i)\rangle , t_0) \dots ( \langle\hat{s}_l^i,\overline{u}_l(i)+t_v\rangle , t_v) \dots   \\
\hat{\mathfrak{s}}_v(l,i) & = & ( \langle\hat{s}_l^i,\overline{u}_l(i)+t_v\rangle , t_v)
\end{eqnarray*}
where $t_0=0$ and for each $c \in C_l$, $\overline{u}_l(i)(c) = 0$.
\end{definition}
It should be noticed that for each $v$, it must be 
$\overline{u}_l(i)+t_v\models I_l^i(\hat{s}_l^i)$, 
since $I_l^i(\hat{s}_l^i)=\top$ according to Def. \ref{def:ta}.
Intuitively, an idle local computation is an instance of the automaton template $U_l$ that stutters in its initial state.
\begin{definition}[\bf Stuttering]
\label{def:stuttering}
Let $x$ and $y$ be two timed computations.
Let $x = x_0 \cdot \ldots \cdot x_v \cdot x_{v+1} \ldots$ 
The timed computation $y$ is a \emph{stuttering} of the timed computation $x$ iff 
for all $v \geq 0$, there exists $r \geq 0$, such that
\begin{eqnarray*}
y & = & x_0 \cdot \ldots \cdot x_v \cdot x_{v,\delta_1} \cdot x_{v,\delta_2} \cdot \ldots \cdot x_{v,\delta_r} \cdot x_{v+1} \ldots
\end{eqnarray*}
where $\delta_1, \delta_2, \dots, \delta_r \in \real^{\geq 0}$, 
$\delta_1 \leq \delta_2 \leq \dots \leq \delta_r$, 
$t_v+\delta_r \leq t_{v+1}$, and \\
\hspace*{1cm}$x_{v,\delta_1} = (\langle\textit{state}(x_v) , \textit{clock}(x_v)+\delta_1\rangle , t_v+\delta_1)$ \\
\hspace*{1cm}$x_{v,\delta_2} = (\langle\textit{state}(x_v) , \textit{clock}(x_v)+\delta_2\rangle , t_v+\delta_2)$ \\
\hspace*{1cm}$\dots$ \\
\hspace*{1cm}$x_{v,\delta_r} = (\langle\textit{state}(x_v) , \textit{clock}(x_v)+\delta_r\rangle , t_v+\delta_r)$ \\
\end{definition}
Intuitively, the above definition means that a stuttering of a given timed 
computation $x$ can be generated by inserting an arbitrary number of 
\emph{delay transitions} (see Def. \ref{def:sem-op}) short enough to not alter 
the validity of temporal conditions of the original computation $x$.
It only represents a more detailed view (i.e. a finer sampling) of the interval between 
a configuration and the next one without changing the original sequence of states.

For the purpose of this work, timed computations conforming to Def. \ref{def:timed-computation}  
(i.e. each configuration complies with Eqn. (\ref{eq:invariant})) can be classified in three different kinds 
of computation:
\begin{itemize}
\item \emph{Infinite Timed Computation}:
      $x$ is a timed computation of infinite length.
\item \emph{Deadlocked Timed Computation}:
      $x$ is a maximal finite timed computation, i.e. in it reaches a final
      configuration where all transitions are disabled.
\item \emph{Finite Timed Computation}:
      $x$ is a (not necessarily maximal) final timed computation, i.e. it is 
      either a deadlocked computation or a finite prefix of an infinite one.
\end{itemize}
%

\section{A Temporal Logic for PNTA}
\label{ssec:iltl}

A dedicated logic is needed in order to specify behaviors of a PNTA. 
This logic, named Indexed-Timed-CTL$^\star$, allows to reason about real-time intervals
and temporal relations (until, before, after, \dots) in systems of arbitrary size.
While its satisfiability problem is undecidable, the problem of model checking a 
PNTA is proved to be decidable, under certain conditions.

\begin{definition}[\bf Indexed-Timed-CTL$^\star$]
\label{def:ITCTL}
Let $\{ P_l \}_{l \in [1,k]}$ be finite sets of atomic propositions.
Let $p(l,i)$ be any atomic proposition such that $l \in [1,k]$, $i\in\nat^{>0}$, 
and $p\in P_l$.
Then, the set of ITCTL$^\star$ formulae is inductively defined as follows:
$$
\begin{array}{rcl}
\phi &::=& \top ~ | ~ p(l,i) ~ | ~ \phi ~ \et ~ \phi ~ | ~ \neg \phi ~ | ~ \bigwedge_{i_l} \phi ~ | ~ A \Phi ~ | ~ A_\textit{fin} \Phi ~ | ~ A_\textit{inf} \Phi \\
\Phi &::=& \phi ~|~ \Phi ~ \et ~ \Phi ~|~ \neg \Phi ~|~ \Phi ~ \untilctl_{\sim q} ~ \Phi \\
\end{array}
$$
where $\sim ~ \in \{ <, \leq, \geq, >, = \}$ and $q \in \rat^{\geq 0}$.
\end{definition}

As usual for branching-time temporal logics, the terms in $\phi$ denote
\emph{state} formulae, while terms in $\Phi$ denote \emph{path} formulae. 
%
For the purpose of this work it is enough to assume the set of atomic propositions 
coincides with the set of states of a given PNTA, i.e. $P_l = S_l$, for every $l$.

The path quantifier $A_\textit{fin}$ (resp. $A_\textit{inf}$) is a variant of the
usual universal path quantifier $A$, restricted to paths that are of finite
length (resp. infinite length). Such variants are inspired by \cite{Emerson2000}.
Missing Boolean ($\lor,\to, \dots$) operators, 
temporal operators ($\glob, \future, \weakctl, \dots$), 
as well as path quantifiers ($E, E_\textit{fin}, E_\textit{inf}$) 
can be defined as usual.
%
The semantics of ITCTL$^\star$ is defined w.r.t. a Kripke Structure
integrating the notions of parametric system size and continuous time semantics 
\cite{Bouyer2009}.
The continuous time model requires that between any two configurations it 
always exists a third state. It is possible, though, introduce \emph{continuous 
time computation trees}
\cite{Alur1990}. Let us call \emph{s-path} a function
$\rho : \real^{\geq 0} \to \mathfrak{C}$ that intuitively maps
a time $t$ with the current system configuration at that time. The mapping
$\rho_{\rfloor_{t^\prime}}:[0,t^\prime) \to \mathfrak{C}$ is a 
\emph{prefix} of $\rho$ iff 
$\forall t < t^\prime . 
    \rho_{\rfloor_{t^\prime}}(t)=\rho(t)$. The mapping
$\rho_{\lfloor_{t^\prime}}:[t^\prime,\infty) \to \mathfrak{C}$ 
is a \emph{suffix} of $\rho$ iff 
$\forall t \geq t^\prime . 
    \rho_{\lfloor_{t^\prime}}(t)=\rho(t)$.
Let us take a prefix $\rho_{\rfloor_{t^\prime}}$ and an s-path
$\rho^\prime$, then their \emph{concatenation} is defined as:
\[
	  (\rho_{\rfloor_{t^\prime}} \cdot \rho^\prime)(t) = \left\{
	        \begin{array}{ll}
	  	      \rho_{\rfloor_{t^\prime}}(t) & ~ \textrm{if } t < t^\prime \\
	  	      \rho^\prime(t - t^\prime)                   & ~ \textrm{else}
	        \end{array}
	     \right. \\
\]
Let $\Pi$ be a set of s-paths, then 
$\rho_{\rfloor_{t^\prime}} \cdot \Pi = 
	   \{\rho_{\rfloor_{t^\prime}} \cdot \rho^\prime : \rho^\prime \in \Pi\}$.
A \emph{continuous time computation tree} is a mapping 
$f : \mathfrak{C} \to 2^{[ \real^{\geq 0} \to \mathfrak{C} ]}$ such that:
\[
\begin{array}{l}
\forall \mathfrak{c} \in \mathfrak{C}. 
    \forall \rho \in f(\mathfrak{c}). 
    \forall t \in \real^{\geq 0} ~ . ~
    \rho_{\rfloor_t} \cdot f(\rho(t)) 
        \subseteq f(\mathfrak{c}). 
\end{array}
\]
%
%
For the purpose of this work, here only \emph{s-paths} defined over \emph{timed
computations} will be considered.

\begin{definition}[\bf PNTA s-paths] \label{def:spath} \\
For each timed computation 
$x = (\mathfrak{c}_0, t_0) \dots (\mathfrak{c}_v, t_v) \dots,$
let us call \emph{PNTA s-path} the s-path
$\rho : \real^{\geq 0} \to \mathfrak{C}$ satisfying:
\[
\forall v . \forall t \in [t_v, t_{v+1}) ~ . ~ \rho(t) = 
    \langle s,c \rangle
\]
where $s = \textit{state}(\mathfrak{c}_v)$ and $c = \textit{clock}(\mathfrak{c}_v) + t - t_v$.

\end{definition}

%
It should be noticed that, according to the above construction, an infinite set of
timed computations can generate the same s-path $\rho$; let us denote such set
by $\mathit{tcomp}(\rho)$.
As a consequence, for each $y \in \mathit{tcomp}(\rho)$, there exists $x \in \mathit{tcomp}(\rho)$
such that $y$ is a stuttering of $x$ (see Def. \ref{def:stuttering}).
The continuous semantics of ITCTL$^\star$ can be defined as follows.

\begin{definition}[\bf Satisfiability of ITCTL$^\star$]
~~ \\
Let $(U_1, \dots, U_k)^{(n_1, \dots, n_k)}$ be a PNTA and $\mathfrak{c}$ be the
current configuration.
Let $\phi$ denote an ITCTL$^\star$ state formula, then the 
\emph{satisfiability relation} $\mathfrak{c} \models \phi$
is defined by structural induction as follows:
\[
\begin{array}{rlcl}
\mathfrak{c} \models &\top \\
\mathfrak{c} \models &p(l,i)                 & \textit{ iff } & p = \textit{state}(\mathfrak{c}(l,i)) \\
\mathfrak{c} \models &\phi_1 ~ \et ~ \phi_2  & \textit{ iff } & \mathfrak{c} \models \phi_1 ~ \textit{and} ~ \mathfrak{c} \models \phi_2 \\
\mathfrak{c} \models &\neg \phi_1            & \textit{ iff } & \mathfrak{c} \not\models \phi_1 \\
\mathfrak{c} \models &A \phi_1               & \textit{ iff } & \rho \models \phi_1, \hfill 

                                              \textit{for all} ~ \rho \in f(\mathfrak{c}) ~ \textit{and} ~ \\
                & && \hfill (| \rho | = \omega ~ \textit{or} ~ \textit{deadlock}(\rho)) \\ 
\mathfrak{c} \models &A_\textit{inf} \phi_1               & \textit{ iff } & \rho \models \phi_1, \hfill 

                                              \textit{for all} ~ \rho \in f(\mathfrak{c}) ~ \textit{and} ~ |\rho| = \omega \\ 
\mathfrak{c} \models &A_\textit{fin} \phi_1               & \textit{ iff } & \rho \models \phi_1, \hfill 

                                              \textit{for all} ~ \rho \in f(\mathfrak{c}) ~ \textit{and} ~ |\rho| < \omega \\ 
\
\
\mathfrak{c} \models &\biget_{i_l} \phi(i_l) & \textit{ iff } & \mathfrak{c} \models \phi_1(i_l), \hfill \textit{for each } i_l \in [1,n_l] \\
\\
\rho \models &\phi_1            & \textit{ iff } & \rho(0) \models \phi_1 \\
\rho \models &\phi_1~\et~\phi_2 & \textit{ iff } & \rho \models \phi_1 ~ \textit{and} ~ \rho \models \phi_2 \\
\rho \models &\neg \phi_1       & \textit{ iff } & \rho \not\models \phi_1 \\
\rho \models &\phi_1 ~ \untilctl_{\sim q} ~ \phi_2 
                                   & \textit{ iff } & \textit{for some} ~ t^\prime \sim q, \textit{where } \sim ~ \in \{ <, \leq,\geq,>,= \} \\
                                  &&                & \hfill \rho_{\lfloor_{t^\prime}} \models \phi_2, \textit{ and } \rho_{\lfloor_t} \models \phi_1 ~ \textit{for all} ~ t \in [0,t^\prime) \\
\end{array}
\]
where $| \rho| = \omega$ (resp. $|\rho| < \omega$, resp.
$\textit{deadlock}(\rho)$) denotes that the s-path $\rho$
has infinite length (resp. has finite length, resp. is deadlocked).
\end{definition}
Note that a finite s-path is not necessarily deadlocked, since it can be a
finite prefix of some infinite s-path. 
When a given PNTA $(U_1, \dots, U_k)^{(n_1, \dots, n_k)}$ 
satisfies an ITCTL$^\star$ state-formula $\phi$ at its initial configuration $\hat{\mathfrak{c}}$, 
this is denoted by
\[
(U_1, \dots, U_k)^{(n_1, \dots, n_k)} \models \phi
\]

\begin{theorem}[\bf Undecidability of ITCTL$^\star$]
The satisfiability problem for ITCTL$^\star$ is undecidable.
\end{theorem}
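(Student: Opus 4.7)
The plan is to reduce from a classically undecidable problem, namely the satisfiability of Metric Temporal Logic (MTL) with punctual timing constraints over dense-time models, proved undecidable by Alur and Henzinger via an encoding of two-counter machine computations. The first step is a syntactic embedding of MTL into ITCTL$^\star$: given an MTL formula $\varphi$ over propositions $\{p_1,\dots,p_m\}$, replace each $p_j$ by the indexed atom $p_j(1,1)$, translate every timed until $\varphi_1\, \mathbf{U}_{\sim q}\, \varphi_2$ into $\varphi_1\, \untilctl_{\sim q}\, \varphi_2$, and prefix the resulting path formula with an existential path quantifier $E_{\textit{inf}}$ (dual of $A_{\textit{inf}}$) to obtain an ITCTL$^\star$ state formula $\Phi$.

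The second step is to exhibit a ``universal'' single-template PNTA $\mathcal{U}$ over which $\Phi$ can be interpreted with full generality. Take $k=1$, $n_1=1$, and build a template whose locations are in bijection with $\{p_1,\dots,p_m\}$, with a transition connecting every ordered pair of locations, every invariant equal to $\top$, a single clock that is reset on each transition, and no clock guard. With a single instance, every conjunctive guard evaluates to $\top$ (there are no other processes to reference), so $\mathcal{U}$ admits, as its PNTA s-paths (Def.~\ref{def:spath}), all dense-time behaviors in which successive locations alternate at arbitrary real-valued time stamps. Consequently, $\varphi$ has an infinite dense-time model if and only if the initial configuration of $\mathcal{U}$ satisfies $\Phi$; hence any decision procedure for ITCTL$^\star$ satisfiability would decide MTL satisfiability, contradicting the Alur--Henzinger theorem.

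The hard part will be the faithful alignment of the two semantics: MTL is typically interpreted over point-based models, while an s-path $\rho$ fixes the atomic proposition $\rho(t)$ as the location reached by the most recent discrete transition, so propositions hold on right-open intervals rather than at isolated instants. I would resolve this either by invoking the interval-based variant of MTL (which is already known undecidable for satisfiability) or by enriching the translation with an auxiliary ``tick'' proposition that marks exactly the discrete transition instants in $\mathcal{U}$, so that MTL events on point models correspond to tick-guarded transitions. Should the dense-time translation turn out to be fragile, a direct alternative is to encode halting of a two-counter machine inside ITCTL$^\star$ itself, by locking consecutive ``instants'' exactly one time-unit apart via $\untilctl_{=1}$ and representing each counter value $n$ by exactly $n$ auxiliary events inside a unit interval, essentially porting the original Alur--Henzinger construction into the ITCTL$^\star$ fragment interpreted on $\mathcal{U}$.
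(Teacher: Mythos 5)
Your route is genuinely different from the paper's, and much heavier than it needs to be: the paper's own proof is a two-line inclusion argument --- the satisfiability problem for TCTL is already undecidable (Alur et al.), and TCTL is a syntactic fragment of ITCTL$^\star$ (one template, one instance, state formulae $A\Phi$, $E\Phi$ with $\untilctl_{\sim q}$), so undecidability is inherited immediately. Reducing instead from MTL with punctual constraints can be made to work, but as written your argument has a gap at its final step. What you actually establish is ``$\varphi$ is MTL-satisfiable iff the initial configuration of the fixed structure $\mathcal{U}$ satisfies $\Phi$'', which is a statement about the \emph{model-checking} problem for one particular PNTA, not about satisfiability; a decision procedure for ITCTL$^\star$ satisfiability does not, by itself, tell you whether $\mathcal{U}\models\Phi$, so the concluding sentence is a non sequitur as stated.

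To close the gap you must additionally argue that $\mathcal{U}\models\Phi$ iff $\Phi$ is satisfiable over the class of continuous-time structures the logic ranges over --- equivalently, restate the reduction as ``$\varphi$ MTL-satisfiable iff $\Phi$ ITCTL$^\star$-satisfiable'' and prove both directions, using $\mathcal{U}$ only as the witness model in the left-to-right direction. The right-to-left direction needs the observation that any model of $E_\textit{inf}\,\tilde{\varphi}$ contains an infinite s-path whose projection onto instance $(1,1)$ is a dense-time signal model of $\varphi$. The left-to-right direction needs $\mathcal{U}$ to realize \emph{every} signal over $\{p_1,\dots,p_m\}$, which your construction does not quite do: in the paper's semantics the atoms $p_j(1,1)$ are the states of a single instance, hence mutually exclusive, whereas an MTL model may make several propositions (or none) true at a time; you need locations in bijection with subsets of $\{p_1,\dots,p_m\}$ and must translate $p_j$ as the disjunction of the location-atoms containing $p_j$. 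Finally, the point-based versus interval-based mismatch you flag is real and must be resolved by invoking the undecidability of MTL in the continuous (signal) semantics, since that is the only semantics your s-paths induce; with these repairs the argument goes through, but it re-proves by hand what the paper gets for free from the TCTL fragment.
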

\begin{proof}
The satisfiability problem for TCTL is undecidable \cite{Alur1990}.
TCTL is included in ITCTL$^\star$, therefore the latter is undecidable.
\end{proof}
In the next section we will call \textit{IMTL} the fragment of ITCTL$^\star$ 
having formulae with the following forms: 
$\bigwedge_{i_l} Q h(i_l)$, where $Q \in \{ A, A_\textit{fin}, A_\textit{inf} \}$ 
and in $h$ only Boolean ($\wedge$ and $\neg$) and temporal 
($\untilctl_{\sim q}$) operators are allowed. We will call \textit{IMITL} the 
subset of IMTL where equality constraints (i.e. $\untilctl_{= q}$) are excluded.

\section{Cutoff Theorem for PNTA with Conjunctive Guards}
\label{ssec:cutoff}
\label{ssec:conj-cutoff}

In this section we prove that a cutoff can be computed to make the PMCP of PNTAs 
with conjunctive guards decidable, for a suitable set of formulae. 
The system in which every template is instantiated as many times as its cutoff,
will be called the \emph{cutoff system}.
Given two instantiations $I = (U_1, \dots, U_k)^{(c_1, \dots,c_k)}$ and 
$I^\prime = (U_1, \dots, U_k)^{(c_1^\prime, \dots,c_k^\prime)}$, such that all 
$c_i^\prime \geq c_i$ and at least one $c_j^\prime > c_j$, it can be said that 
$I^\prime$ is \emph{bigger} than $I$, written $I^\prime > I$.
The cutoff theorem states that given a cutoff system $I$, for each $I^\prime > I$, 
both $I^\prime$ and $I$ satisfy the same subset of ITCTL$^\star$ formulae.

\begin{theorem}[\bf Conjunctive Cutoff Theorem]
\label{the:conj-cutoff}
\ \ \\
Let $(U_1, \dots, U_k)$ be a set of TA templates with conjunctive guards.
Let $\phi = \bigwedge_{i_{l_1},\dots,i_{l_h}}Q \Phi(i_{l_1},\dots,i_{l_h})$  
where $Q \in \{ A, A_\textit{inf}, A_\textit{fin}, 
E, E_\textit{inf}, E_\textit{fin} \}$ and $\Phi$ is an IMTL formula and 
$\{ l_1, \dots, l_h \} \subseteq [1,k]$. Then
$$
\begin{array}{l}
\forall (n_1,\dots,n_k) . (U_1, \dots, U_k)^{(n_1, \dots,n_k)} \models \phi 
~ ~ \textit{iff } \\
\hspace{1cm} \forall (d_1,\dots,d_k) \preceq (c_1, \dots,c_k) . (U_1, \dots, U_k)^{(d_1, \dots,d_k)} \models \phi
\end{array}
$$
where the cutoff $(c_1,\dots,c_k)$ can be computed as follows:
\begin{itemize}
\item In case $Q \in \{ A_\textit{inf}, E_\textit{inf} \}$ 
      (i.e., deadlocked or finite timed computations are ignored). Then
      $c_{l}=2$ if $l\in\{l_1,\dots,l_h\}$, and  
      $c_{l}=1$ otherwise (i.e. $l \in [1,k] \setminus \{ l_1,\dots,l_h\}$).
\item In case $Q \in \{ A_\textit{fin}, E_\textit{fin} \}$ (i.e. finite timed 
      computations, either deadlocked or finite prefixes of infinite 
      computations). Then
      $c_{l}=1$ for each $l$.
\item In case $Q \in \{ A, E \}$ (i.e., infinite and deadlocked). Then 
      $c_{l}=2|U_{l}|+1$ if $l\in\{l_1,\dots,l_h\}$;
      $c_{l}=2|U_{l}|$ otherwise (i.e. $l \in [1,k] \setminus \{l_1,\dots,l_h\}$).
\end{itemize}
\end{theorem}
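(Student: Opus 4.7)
The plan is to prove both directions of the biconditional by constructing explicit simulations between computations of the cutoff system and any larger system, splitting on the quantifier $Q$ and on whether it is universal or existential; the two existential cases reduce to the universal ones by duality. The two main tools will be a \emph{monotonicity lemma} (a smaller system embeds into a bigger one by adding instances that stay idle in their initial states, in the sense of Def.~\ref{def:idle-computation}) and a \emph{projection lemma} (a timed computation of the bigger system restricts to a valid timed computation of the smaller one when suitable indices are selected).

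First I would prove an \textbf{idle-extension lemma}: given a timed computation $x$ of $(U_1,\dots,U_k)^{(n_1,\dots,n_k)}$ and any additional counts $(m_1,\dots,m_k)$, one can construct a timed computation $x'$ of $(U_1,\dots,U_k)^{(n_1+m_1,\dots,n_k+m_k)}$ that agrees with $x$ on the original instances and keeps all new instances idle. This is sound because (i) Def.~\ref{def:cg-pnta} forces every conjunctive guard to admit $\hat{s}_l$ as a disjunct, so idle copies never falsify an existing guard evaluation; (ii) Def.~\ref{def:ta} imposes $I_l(\hat{s}_l)=\top$, so idle copies satisfy every delay transition for arbitrary durations; (iii) the interleaving semantics of Def.~\ref{def:sem-op} means extra instances do not alter the transitions of the original ones. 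This lemma immediately gives one direction for the existential variants $E,E_\text{fin},E_\text{inf}$, and also lifts satisfaction of state-local IMTL properties of the quantified indices from smaller to larger systems.

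Next I would prove the \textbf{projection lemma}: given a timed computation $x$ of the larger system and a choice of indices $I_l\subseteq[1,n_l]$ of size $c_l$, the restriction of $x$ to the instances in $I$ is a valid timed computation of $(U_1,\dots,U_k)^{(c_1,\dots,c_k)}$. The only non-trivial step is guard satisfaction: a synchronization taken by an $I$-instance may have used states of instances outside $I$ to satisfy its guard, but by Def.~\ref{def:cg-pnta} the missing conjuncts always accept $\hat{s}$ as a disjunct, so keeping one idle copy per template in the projected system witnesses the guard. Clock constraints, resets and invariants are purely local per instance and therefore carry over unchanged, and delay steps project to delay steps of the same duration. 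Combined with idle-extension, this handles the universal cases $A,A_\text{fin},A_\text{inf}$ by contrapositive.

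The case split on $Q$ determines how many copies per template suffice. For $Q\in\{A_\text{fin},E_\text{fin}\}$ one copy per template is enough because finite computations terminated at an arbitrary configuration can always be projected with a single idle copy absorbing every disjunctive guard. For $Q\in\{A_\text{inf},E_\text{inf}\}$ a second copy per quantified template is needed so that when a quantified index must coincide with an index required to witness a guard on an infinite path, the other copy can take that role; non-quantified templates still need only one copy because deadlocked finite extensions are excluded. For the general $Q\in\{A,E\}$, which must also simulate deadlocks, the Emerson-Kahlon combinatorial bound $2|U_l|+1$ (respectively $2|U_l|$) arises because at most $|U_l|$ different template states can host stalled copies in any deadlocked configuration, so enough copies are required to reproduce any such pattern together with the quantified instances. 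The main obstacle will be adapting the untimed combinatorial argument to the continuous-time setting: the projected and extended computations must be matched up to stuttering (Def.~\ref{def:stuttering}) so that IMTL constraints of the form $\untilctl_{\sim q}$ and the associated invariants remain satisfied pointwise on s-paths (Def.~\ref{def:spath}), which requires that delay transitions be aligned identically in the two systems and that the added idle copies never introduce a delay shorter than the one in the source computation.
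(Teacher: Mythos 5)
Your overall architecture (idle extension, a reduction from the large system to the cutoff system, duality to recover the universal quantifiers) is essentially the paper's decomposition into the Conjunctive Monotonicity Lemma \ref{th:monotonicity}, the Bounding Lemma \ref{th:bounding} and the Truncation Lemma \ref{th:truncation}, and your idle-extension lemma is sound for exactly the reasons you give. The genuine gap is in your projection lemma. For conjunctive guards, guard satisfaction is the \emph{trivial} part of projecting: dropping instances only removes conjuncts, and idle copies never falsify a guard because the initial state is a mandatory disjunct; no instance ever needs to "witness" a guard. What naive restriction does \emph{not} preserve is the type of the resulting computation (infinite, deadlocked, or merely finite), and that is precisely what the quantifiers $A_\textit{inf}/E_\textit{inf}$, $A_\textit{fin}/E_\textit{fin}$ and $A/E$ discriminate and what forces the cutoff values. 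An infinite computation of the big system can project onto a finite one if the retained instances move only finitely often; this, and not guard witnessing (a disjunctive-guard phenomenon that your inf-case justification and your phrase "absorbing every disjunctive guard" seem to import), is why a second copy of the quantified template is needed: the paper's construction explicitly reassigns one extra instance to reproduce an infinite local computation $x(2,i)$ when $x(1,1)$ and $x(2,1)$ are finite. Dually, a deadlocked computation need not project onto a deadlocked one, because the instances whose states disable the remaining transitions may be among those projected away.

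That deadlock-preserving selection is the combinatorial heart of the $A/E$ case, and your sketch replaces it with a reference to "the Emerson--Kahlon bound" rather than an argument: the paper keeps, for every retained instance, a local computation whose state blocks it at the deadlock point; there are at most $|U_2|$ distinct blocking states, this doubles to $2|U_2|$ retained local computations because blocking between instances in equal states is mutual and guards are not reflexive, and one additional idle copy is kept, which yields $c_2=2|U_2|+1$ (and $2|U_2|$ for the unquantified template). Without spelling out this selection, and without the analogous carrier-of-infinity selection in the inf case, your projection lemma as stated would produce a valid but possibly wrongly-typed computation, and the contrapositive step for $A$, $A_\textit{inf}$, $A_\textit{fin}$ would fail. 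A smaller structural remark: the paper proves the bounding step only for systems of the form $(1,n)$ and lifts it to arbitrary $(n_1,n_2)$ via the Truncation Lemma \ref{th:truncation} by grouping $U_1^{n_1}$ into a single template; your all-templates-at-once projection would subsume that step and is a legitimate simplification, but only after the path-type-preserving selection above is in place. Your closing point about aligning delays and matching up to stuttering so that $\untilctl_{\sim q}$ constraints survive is correct and corresponds to the paper's constructions reusing the original timestamps.
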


The proof of the Cutoff Theorem consists of three steps.
The first step (\emph{Conjunctive Monotonicity Lemma}) shows that 
adding instances to the system does not alter the truth of logic formulae. 
The second step (\emph{Conjunctive Bounding Lemma}) proves that removing an
instance beyond the cutoff number, does not alter the truth of logic formulae
either.
The third step (\emph{Conjunctive Truncation Lemma}) generalizes the 
Conjunctive Bounding Lemma to a system that has two automaton templates with an 
arbitrary number of instances.
The given proofs can be generalized to systems with an arbitrary number of
templates.

\begin{theorem}[\bf Conjunctive Monotonicity Lemma]
\label{th:monotonicity}
Let $U_1$ and $U_2$ be two TA templates with conjunctive guards.
Let $\Phi(1_l)$ be an IMTL formula, with $l \in \{1, 2\}$.
Then for any $n \in \nat$
such that $n \geq 1$ we have:
$$
\begin{array}{cl}
(i)  ~ & ~ \!\!(U_1,U_2)^{(1,n)}   \models Q \Phi(1_2) \then 
	       (U_1,U_2)^{(1,n+1)} \models Q\Phi(1_2) \\	
(ii) ~ & ~ \!\!(U_1,U_2)^{(1,n)}   \models Q \Phi(1_1) \then 
	       (U_1,U_2)^{(1,n+1)} \models Q\Phi(1_1) \\
\end{array}
$$
where $Q \in \{ E, E_\textit{inf}, E_\textit{fin} \}$.
\end{theorem}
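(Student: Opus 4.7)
The plan is to show that any existential witness in the smaller system $(U_1,U_2)^{(1,n)}$ lifts to a witness in $(U_1,U_2)^{(1,n+1)}$ by adding one new copy of $U_2$ that remains idle in its initial state $\hat{s}_2$ throughout the entire computation. The key structural fact that makes this work is Definition 4: every conjunctive guard must list the initial state $\hat{s}_2(m)$ as a disjunct for every index $m$, so an instance sitting in $\hat{s}_2$ can never block any transition fired by the original instances.

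First I would fix a witness timed computation $x=(\mathfrak{c}_0,t_0)(\mathfrak{c}_1,t_1)\cdots$ of $(U_1,U_2)^{(1,n)}$ with $x\models \Phi(1_l)$ and whose length class matches the quantifier ($|x|=\omega$ for $E_{\text{inf}}$, $|x|<\omega$ for $E_{\text{fin}}$, infinite or deadlocked for $E$). Then I would define $x'$ on $(U_1,U_2)^{(1,n+1)}$ by keeping every $\mathfrak{c}_v$ unchanged on the first $1+n$ components and augmenting $\langle\overline{s}_2,\overline{u}_2\rangle$ with $\overline{s}_2(n+1):=\hat{s}_2$ and $\overline{u}_2(n+1)(c):=t_v$ for every $c\in C_2$; this is exactly the idle local computation $\hat{\mathfrak{s}}(2,n+1)$ of Definition 7. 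Since $I_2(\hat{s}_2)=\top$, the invariant constraint of Equation (1) is vacuous for the new instance.

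Next I would verify that every step of $x'$ is a legal PNTA transition. Delay steps lift trivially, as they increment all clocks uniformly and the new instance's invariant is $\top$. For each synchronization step $\mathfrak{c}_v\xrightarrow{\gamma}\mathfrak{c}_{v+1}$ fired by some existing instance, the conjunctive guard $\gamma$ contains, for the index $n+1$ of the new $U_2$-instance, the mandatory disjunct $\hat{s}_2(n+1)$; since that instance is in $\hat{s}_2$ throughout, this conjunct evaluates to true, and all other conjuncts refer to indices present in the smaller system and are true by hypothesis. Hence $\textit{state}(x'_v)\models\gamma$ as well.

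Finally I would check the length class and formula preservation. For $E_{\text{inf}}$ and $E_{\text{fin}}$, $x'$ inherits $|x'|=|x|$ directly. For $E$, if $x$ is infinite we are done; if $x$ is deadlocked, I prolong $x'$ by an infinite sequence of delay transitions, all executable because every invariant of every original instance was already satisfied at the final configuration and $I_2(\hat{s}_2)=\top$, producing an infinite $x'$ that still witnesses an $E$-path. Since $\Phi(1_l)$ only refers to the state and clock of instance $U_l^1$, which are identical in $x$ and $x'$, the induced s-paths agree on every ITCTL$^\star$ subformula built from $\Phi(1_l)$, giving $x'\models\Phi(1_l)$. The argument for part (ii), where the formula indexes the (unique) instance of $U_1$, is identical. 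I expect the only delicate step to be the deadlock case of $E$: one must check that padding with an idle delay tail yields an honest timed computation of the extended system and does not affect the truth of $\Phi(1_l)$, which follows because $\Phi$ was already satisfied along the finite prefix common to $x$ and $x'$ and the IMTL semantics is insensitive to stuttering suffixes on instances it does not mention.
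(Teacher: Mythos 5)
Your core construction is exactly the paper's: lift a witness computation of $(U_1,U_2)^{(1,n)}$ by appending one copy of $U_2$ that runs the idle local computation $\hat{\mathfrak{s}}(2,n+1)$, and observe that since every conjunctive guard is forced (Def.~4) to contain the initial state $\hat{s}_2(m)$ as a disjunct for every index, the idle instance never falsifies any guard fired by the original instances; invariants are unproblematic because $I_2(\hat{s}_2)=\top$, and $\Phi(1_l)$ only mentions the first instance of $U_l$, whose local computation is unchanged. Up to that point your argument matches (and is more explicit than) the proof in the paper.

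However, your treatment of the deadlocked sub-case of $Q=E$ is a genuine error. First, the proposed padding is not executable: a deadlocked computation is by definition \emph{maximal}, i.e.\ in its final configuration \emph{all} transitions are disabled, including delay transitions (a delay $d>0$ requires every instance's invariant to hold throughout $[0,d]$, and it is precisely a tight upper-bound invariant such as $c\leq k$ in Fischer's protocol that produces deadlocks); satisfaction of the invariants \emph{at} the final instant, which is all you invoke, does not give you any positive delay, and adding an instance with invariant $\top$ changes nothing since the blocking invariants of the original instances persist in the lifted configuration. Second, even if such padding were possible it would be unsound: appending an infinite delay tail extends the time domain of the whole path, including its projection onto $U_2^1$, and time-constrained IMTL formulas are sensitive to this --- e.g.\ $\neg(\top ~\untilctl_{\geq q}~ p(2,1))$ can hold on a deadlocked path of duration less than $q$ and fail on its time-extended version, so "insensitive to stuttering suffixes" is false here. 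No conversion to an infinite path is needed in the first place: under the semantics of $E$, deadlocked witnesses are admissible, so the lifted computation should be kept finite and argued to remain a legitimate witness of the same class (the paper keeps the lifted computation as is; note that what actually deserves a check in this sub-case is that the added idle instance does not itself enable a transition and thereby destroy maximality, not whether time can be made to diverge).
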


\ifextended
\begin{proof} 
\ \ \\
(i) The first part of the theorem states that there exists a 
\emph{s-path} $\rho$ of $(U_1,U_2)^{(1,n)}$ such that $\rho \models \Phi(1_2)$.
For each \emph{timed-computation} $x = (\mathfrak{c}_0,t_0)(\mathfrak{c}_1,t_1),\dots$, $x \in \mathit{tcomp}(\rho)$ 
it is possible to build a timed-word $y = (\mathfrak{y}_0, t_0)(\mathfrak{y}_1, t_1), \dots$ of $(U_1,U_2)^{(1,n+1)}$ 
such that at every step $v$,
$$
y_v = ( x_v(1,1) , x_v(2,1) , \dots, x_v(2,n), \hat{\mathfrak{s}}_v(2,n+1) )
$$
that is, the $n+1$-th instance of $U_2$ always remains (i.e. stutters) in its initial state $\hat{s}_2$, 
the rest of the automaton instances behave as in $x$. 
Since, in this case, each conjunctive guard has the following form
\begin{equation}
\label{guard}
	(\hat{s}_1^1 \vee p_1^1 \vee \dots \vee q_1^1) ~ \wedge ~ 
	\bigwedge_{\substack{1 < i \leq n}} (\hat{s}_2^{i} \vee p_2^{i} \vee \dots \vee q_2^{i})
\end{equation}
(i.e., it includes the initial state for each instance)
adding the $n+1$-th instance stuttering in its initial state does not change the truth value of such guards, then the conclusion holds. \\
(ii) The second part of the theorem follows from a very similar argument. \\
\qed
\end{proof}
\else
A detailed proof of the theorem is in the extended version of this paper
\cite{Spalazzi14}. 
\fi
Intuitively, from any time 
computation $x$ one can build a new time computation $y$ where each instance 
behaves as in $x$, except for a new instance of $U_2$ that halts in its initial 
state (remember that by definition the initial states don't falsify any 
conjunctive guard). 

\begin{theorem}[\bf Conjunctive Bounding Lemma]
\label{th:bounding}
Let $U_1$ and $U_2$ be two TA templates with conjunctive guards.
Let $\Phi(1_l)$ be an IMTL formula, with $l \in \{1, 2\}$.
Then for any $n \in \nat$
such that $n \geq 1$ we have:
$$
\begin{array}{cl}
(i) & \forall n \geq c_2 . (U_1,U_2)^{(1,n)} \models Q \Phi(1_2) ~ \then ~ (U_1,U_2)^{(1,c_2)} \models Q \Phi(1_2)\\
(ii) & \forall n \geq c_1 .(U_1,U_2)^{(1,n)} \models Q \Phi(1_1) ~ \then ~ (U_1,U_2)^{(1,c_1)} \models Q \Phi(1_1) \\
\end{array}
$$
where $Q \in \{ E, E_\textit{inf}, E_\text{fin} \}$ and:
\begin{itemize}
\item $c_1 = 1$ and $c_2 = 2$, when $Q = E_\textit{inf}$; 
\item $c_1 = c_2 = 1$, when $Q = E_\textit{fin}$; 
\item $c_1 = 2|U_2|$ and $c_2 = 2|U_2| + 1$, when $Q = E$.
\end{itemize}
\end{theorem}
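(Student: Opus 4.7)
The plan is, starting from an s-path $\rho$ of $(U_1,U_2)^{(1,n)}$ witnessing $Q\Phi$, to pick a timed computation $x \in \mathit{tcomp}(\rho)$ and construct a projected timed computation $y$ in the smaller system by retaining only a carefully chosen subset of $U_2$ instances. The crucial observation is that conjunctive guards weaken monotonically when instances are dropped: a guard of the form stated in Definition~\ref{def:cg-pnta} simply loses some of its conjuncts under projection, so every synchronization transition that fires in $x$ still fires in $y$ as long as the retained instances occupy the same local states. Invariants depend only on a single instance's own clocks and state, hence they are automatically preserved. Because the IMTL formula $\Phi(1_l)$ refers only to the indexed instance, the s-path induced by $y$ via Definition~\ref{def:spath} still satisfies $\Phi$.

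For $Q = E_\textit{fin}$, truncate $x$ at a finite prefix that already witnesses $\Phi$ and project onto the indexed instance only (the single $U_1$ copy is kept by hypothesis), giving $c_1 = c_2 = 1$. For $Q = E_\textit{inf}$ in part (i), add one auxiliary $U_2$ instance that stutters in its initial state $\hat{s}_2$ throughout: because $I_2(\hat{s}_2) = \top$ and $\hat{s}_2$ is a disjunct of every conjunctive guard by Definition~\ref{def:cg-pnta}, this instance can always delay and never breaks a guard, so the projection is a genuinely infinite, non-deadlocked computation; this gives $c_2 = 2$. Part (ii) is analogous with $c_1 = 1$, since the single retained $U_2$ instance itself plays the role of the idle stutterer.

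The main obstacle is $Q = E$, where $x$ may be deadlocked rather than infinite. The infinite sub-case is handled as above, but in the deadlock sub-case removing $U_2$ instances threatens to re-enable transitions whose conjunctive guards had been falsified precisely by the removed ones. To prevent this I would retain, beyond the indexed instance, a collection of \emph{witness} $U_2$ instances whose local states collectively disable every transition enabled on a kept instance at the deadlocked configuration $\mathfrak{c}^*$. Concretely, for every candidate transition $s \xrightarrow{g,r,\gamma} t$ of a kept instance, $\gamma$ contains a conjunct $\hat{s}_2(m) \vee s_2^1(m) \vee \dots \vee s_2^p(m)$ which fails at $\mathfrak{c}^*$, so at least one $U_2$ instance of the big system is in a ``disabling'' state outside $\{\hat{s}_2, s_2^1, \dots, s_2^p\}$; I retain one such witness per conjunct to be preserved. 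Since the candidate source states number at most $|U_2|$ and the possible disabling states also number at most $|U_2|$, keeping $2|U_2|$ $U_2$ witnesses suffices, yielding $c_2 = 2|U_2|+1$ and $c_1 = 2|U_2|$. Each witness is driven from $\hat{s}_2$ to its target disabling state along a prefix of $x$, again exploiting that $\hat{s}_2$ satisfies every conjunctive guard so these preparatory moves do not interfere with the indexed instance's trajectory.
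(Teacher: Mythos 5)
Your overall skeleton matches the paper's proof: pick $x \in \mathit{tcomp}(\rho)$, build a computation $y$ of the cutoff system by copying the local computations of the retained instances, use the fact that conjunctive guards only lose conjuncts when instances are dropped (and that initial states never falsify a guard) to re-enable every retained transition, and, in the deadlocked case, keep up to $2|U_2|$ witness copies in disabling states plus the indexed instance, driven to those states along (prefixes of) their local computations in $x$. Your counting for the deadlock case is phrased differently from the paper's (the paper's factor $2$ comes from pairs of instances in the same state that mutually disable each other, since guards only constrain \emph{other} instances) and both glide over the closure issue that the witnesses' own transitions must also stay disabled, but this part is at the same level of rigor as the published argument.

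There is, however, a genuine gap in your treatment of the infinite case ($Q=E_\textit{inf}$, and the infinite sub-case of $Q=E$). You keep only $x(1,1)$ and $x(2,1)$ plus an idle stutterer and claim the result is ``genuinely infinite, non-deadlocked.'' That does not follow: if in $x$ the retained instances $U_1^1$ and $U_2^1$ perform only finitely many discrete steps and the infinitude of $x$ is carried by some dropped copy $U_2^i$ with $i>1$ (in particular when time cannot diverge because, say, $U_2^1$ ends in a location with a bounded invariant), then your projected sequence is a \emph{finite} computation of the small system, and it need not be deadlocked either, since removing instances only weakens guards and may leave transitions enabled at its last configuration; such a path witnesses neither $E_\textit{inf}\Phi(1_2)$ nor $E\Phi(1_2)$. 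The idle companion cannot repair this: delays are global and are blocked by the retained indexed instance's invariant, and appending extra moves would change the path so that satisfaction of $\Phi$ would have to be re-established. This is exactly why the paper's rule for the second $U_2$ slot sets $y(2,2)=x(2,i)$ (a copy of an \emph{infinite} local computation) whenever $x(1,1)$ and $x(2,1)$ are finite, and only otherwise an idle computation --- that is the real reason $c_2=2$ rather than $1$ for $E_\textit{inf}$; likewise in part (ii) with $c_1=1$ the single retained $U_2$ copy must sometimes be this infinite companion rather than the idle stutterer you describe. Your proposal states the right cutoffs but the construction behind them fails in this case and needs the companion-copy idea to go through.
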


\ifextended
\begin{proof}
\ \ \\
(i) ($\Rightarrow$) 

The first part of the theorem states that there exists a 
\emph{s-path} $\rho$ of $(U_1,U_2)^{(1,n)}$ such that $\rho \models \Phi(1_2)$. 
Let $x = (\mathfrak{c}_0,t_0)(\mathfrak{c}_1,t_1),\dots$ be a \emph{timed-computation} such that 
$x \in \mathit{tcomp}(\rho)$.
Then, it is possible to distinguish three distinct cases: 
$x$ is an infinite computation, a deadlocked computation, or a finite computation.

Let us suppose that $x$ is an \emph{infinite computation}.
This means that $x(1,1)$ is an infinite local computation or 
there exists $i \leq n$ such that $x(2,i)$ is an infinite local computation
(they are not mutually exclusive).
Then, it is possible to build a timed computation  
$y = (\mathfrak{y}_0, t_0)(\mathfrak{y}_1, t_1) \dots$ of $(U_1,U_2)^{(1,c_2)}$ 
as follows:
\begin{eqnarray}
&& \hspace*{-.7cm} 
   y(1,1) = x(1,1)                                                    \label{y11}     \\
&& \hspace*{-.7cm} 
   y(2,1) = x(2,1)                                                    \label{y21}     \\
&& \hspace*{-.7cm} 
   y(2,2) = \left\{
       \begin{array}{ll}
           x(2,i)                    & \textrm{if $x(1,1)$ and $x(2,1)$ are finite,} \\
                                     & \textrm{and $x(2,i)$ with $i>1$ is infinite,} \\
           \hat{\mathfrak{s}}(2,2)   & \textrm{otherwise}
       \end{array}
   \right.                                                            \label{y22}     \\
&& \hspace*{-.7cm} 
   \forall j \in [ 3 , c_2 ] . y(2,j) = \hat{\mathfrak{s}}(2,j)       \label{y2j}     
\end{eqnarray}
Rule \ref{y2j} can be applied only when $c_2=2|U_2| + 1$.
It should be noticed that $y$ preserves the local timed computation of $U_2^1$ and 
is a timed computation of $(U_1,U_2)^{(1,c_2)}$ (consider that either $c_2=2$ or $c_2=2|U_2| + 1$).
Indeed, for each $v$, five cases can occur:
\begin{description}
\item[I.1.] 
     $\textit{state}(x_{v+1})=\textit{state}(x_v)$. \\
     Therefore, $\textit{state}(y_{v+1})=\textit{state}(y_v)$ by construction. 
\item[I.2.] 
     $\textit{state}(x_v(1,1))\xrightarrow{g,r,\gamma}\textit{state}(x_{v+1}(1,1))$. \\
     This means that $\textit{state}(\mathfrak{c}_v)\models\gamma$.
     As $\gamma$ has the form reported in Equation (\ref{guard}), this implies $\textit{state}(\mathfrak{y}_v)\models\gamma$ 
     (instances $U_2^i$ eventually stuttering in their initial states do not prevent the progress of instance $U_1^1$).
     Furthermore, Case I.2 means that $\textit{clock}(\mathfrak{c}_v(1,1))\models g$ and, thus, 
     $\textit{clock}(\mathfrak{y}_v(1,1))\models g$ by construction \linebreak (Rule \ref{y11}). \\
     Therefore, $\textit{state}(y_v(1,1))\xrightarrow{g,r,\gamma}\textit{state}(y_{v+1}(1,1))$.
\item[I.3.] 
     $\textit{state}(x_v(2,1))\xrightarrow{g,r,\gamma}\textit{state}(x_{v+1}(2,1))$. \\
     Similarly to Case I.2, it is possible to prove that 
     $\textit{state}(y_v(2,1))\xrightarrow{g,r,\gamma}\textit{state}(y_{v+1}(2,1))$.
\item[I.4.] 
     $\textit{state}(x_v(2,i))\xrightarrow{g,r,\gamma}\textit{state}(x_{v+1}(2,i))$ and \\ $y(2,2)=x(2,i)$.  
     Similarly to Case I.2, it is possible to prove that 
     $\textit{state}(y_v(2,2))\xrightarrow{g,r,\gamma}\textit{state}(y_{v+1}(2,2))$.
\item[I.5.] 
     $\textit{state}(x_v(2,i))\xrightarrow{g,r,\gamma}\textit{state}(x_{v+1}(2,i))$ and \\ $y(2,2)\neq x(2,i)$. \\
     Therefore, 
     $\textit{state}(y_v)=\textit{state}(y_{v+1})$ by construction.
\end{description}

Let us suppose that $x$ is an \emph{deadlocked computation}, i.e. there exists $v$ such that, 
for each $l, i$, for each $x_v(l,i) \xrightarrow{g,r,\gamma} q \in \tau_l^i$, $\textit{state}(x_v) \not\models \gamma$, 
where $\gamma$ has the form reported in Equation (\ref{guard}).
This means that there exists $h, j$ such that $x_v(h,j) \not\in \{\hat{s}_h^{j} \vee p_h^{j} \vee \dots \vee q_h^{j}\}$ of $\gamma$.

Then, it is possible to build a timed computation 
$y = (\mathfrak{y}_0, t_0)(\mathfrak{y}_1, t_1) \dots$ of $(U_1,U_2)^{(1,2|U_2| + 1)}$ 
that preserves the deadlock as follows:
\begin{eqnarray}
&& \hspace*{-.7cm} 
   y(1,1) = x(1,1)                                                    \label{y11b}           \\
&& \hspace*{-.7cm} 
   y(2,1) = x(2,1)                                                    \label{y21b}           \\
&& \hspace*{-.7cm} 
   \forall k \in [ 2 , 2|U_2| + 1 ] .                                 \nonumber              \\
&& \hspace*{-.2cm} 
   y(2,k) = \left\{
       \begin{array}{ll}
           x(2,j)                    & \textrm{\!\!\!if there exists a distinct $y_v(l,i)$}  \\
                                     & \textrm{\!\!\!that is deadlocked by $x_v(2,j)$,}      \\
           \hat{\mathfrak{s}}(2,k)   & \textrm{\!\!\!otherwise}
       \end{array}
   \right.                                                            \label{y2jb}     
\end{eqnarray}
It should be noticed that, at most, there are $|U_2|$ distinct states that deadlock a local computation.
Nevertheless, when a given local state $x_v(2,i)=p$ is deadlocked by a given local state $x_v(2,j)=p$ 
(i.e. $x_v(2,i)=x_v(2,j)$), 
then the reverse holds as well, i.e., $x_v(2,j)$ is deadlocked by $x_v(2,i)$.
This means that in the worst case, the construction of $y$ requires at most $2|U_2|$ copies of local computations of $x$
(as guards are not reflexive).
Finally, $y$ needs at least an idle local computation of $U_2$ stuttering in its initial state to assure that each guard 
used to build $x$ can be fired in $y$ as well.
This justifies the choice of $c_2=2|U_2| + 1$.
At this point, it is possible to prove, similarly to the previous case, that 
$y$ preserves the local timed computation of $U_2^1$ and 
is a timed computation of $(U_1,U_2)^{(1,c_2)}$.
Indeed, for each $v$, five cases can occur:
\begin{description}
\item[D.1.] 
     $\textit{state}(x_{v+1})=\textit{state}(x_v)$. \\
     Therefore, $\textit{state}(y_{v+1})=\textit{state}(y_v)$ by construction. 
\item[D.2.] 
     $\textit{state}(x_v(1,1))\xrightarrow{g,r,\gamma}\textit{state}(x_{v+1}(1,1))$. \\
     This means that $\textit{state}(\mathfrak{c}_v)\models\gamma$.
     As $\gamma$ has the form reported in Equation (\ref{guard}), this implies $\textit{state}(\mathfrak{y}_v)\models\gamma$ 
     (instances $U_2^i$ eventually stuttering in their initial states do not prevent the progress of instance $U_1^1$).
     Furthermore, Case I.2 means that $\textit{clock}(\mathfrak{c}_v(1,1))\models g$ and, thus, 
     $\textit{clock}(\mathfrak{y}_v(1,1))\models g$ by construction \linebreak (Rule \ref{y11}). \\
     Therefore, $\textit{state}(y_v(1,1))\xrightarrow{g,r,\gamma}\textit{state}(y_{v+1}(1,1))$.
\item[D.3.] 
     $\textit{state}(x_v(2,1))\xrightarrow{g,r,\gamma}\textit{state}(x_{v+1}(2,1))$. \\
     Similarly to Case I.2, it is possible to prove that 
     $\textit{state}(y_v(2,1))\xrightarrow{g,r,\gamma}\textit{state}(y_{v+1}(2,1))$.
\item[D.4.] 
     $\textit{state}(x_v(2,j))\xrightarrow{g,r,\gamma}\textit{state}(x_{v+1}(2,j))$ and \\ $y(2,k)=x(2,j)$.  
     Similarly to Case I.2, it is possible to prove that 
     $\textit{state}(y_v(2,k))\xrightarrow{g,r,\gamma}\textit{state}(y_{v+1}(2,k))$.
\end{description}

Let us suppose that $x$ is a \emph{finite computation}.
This means that each local computation is finite. 
Then, it is possible to build a timed computation  
$y = (\mathfrak{y}_0, t_0)(\mathfrak{y}_1, t_1) \dots$ of $(U_1,U_2)^{(1,c_2)}$ 
as follows:
\begin{eqnarray}
&& \hspace*{-.7cm} 
   y(1,1) = x(1,1)                                                    \label{y11t}     \\
&& \hspace*{-.7cm} 
   y(2,1) = x(2,1)                                                    \label{y21t}     \\
&& \hspace*{-.7cm} 
   \forall j \in [ 2 , c_2 ] . y(2,j) = \hat{\mathfrak{s}}(2,j)       \label{y2jt}     
\end{eqnarray}
Rule \ref{y2jt} can be applied only when $c_2=2|U_2| + 1$.
It should be noticed that $y$ preserves the local timed computation of $U_2^1$ and 
is a timed computation of $(U_1,U_2)^{(1,1)}$ (consider that either $c_2=1$ or $c_2=2|U_2| + 1$).
Indeed, for each $v$, four cases can occur:
\begin{description}
\item[F.1.] 
     $\textit{state}(x_{v+1})=\textit{state}(x_v)$. \\
     Therefore, $\textit{state}(y_{v+1})=\textit{state}(y_v)$ by construction. 
\item[F.2.] 
     $\textit{state}(x_v(1,1))\xrightarrow{g,r,\gamma}\textit{state}(x_{v+1}(1,1))$. \\
     This means that $\textit{state}(\mathfrak{c}_v)\models\gamma$.
     As $\gamma$ has the form reported in Equation (\ref{guard}), this implies $\textit{state}(\mathfrak{y}_v)\models\gamma$ 
     (instances $U_2^i$ eventually stuttering in their initial states do not prevent the progress of instance $U_1^1$).
     Furthermore, Case F.2 means that $\textit{clock}(\mathfrak{c}_v(1,1))\models g$ and, thus, 
     $\textit{clock}(\mathfrak{y}_v(1,1))\models g$ by construction \linebreak (Rule \ref{y11t}). \\
     Therefore, $\textit{state}(y_v(1,1))\xrightarrow{g,r,\gamma}\textit{state}(y_{v+1}(1,1))$.
\item[F.3.] 
     $\textit{state}(x_v(2,1))\xrightarrow{g,r,\gamma}\textit{state}(x_{v+1}(2,1))$. \\
     Similarly to Case F.2, it is possible to prove that 
     $\textit{state}(y_v(2,1))\xrightarrow{g,r,\gamma}\textit{state}(y_{v+1}(2,1))$.
\item[F.4.] 
     $\textit{state}(x_v(2,i))\xrightarrow{g,r,\gamma}\textit{state}(x_{v+1}(2,i))$ and $i\geq 2$. \\
     Therefore, 
     $\textit{state}(y_v)=\textit{state}(y_{v+1})$ by construction.
\end{description}

In conclusion, it has been built a new Krypke Structure such that 
for each $x \in \textit{tcomp}(\rho)$, with $\rho$ starting from $x_0$, there 
exists $y \in \textit{tcomp}(\rho^\prime)$ such that $\rho^\prime$ starts in $y_0$
and it preserves the local timed computation of $U_2^1$.

\ \ \\
(i)($\Leftarrow$) \\
The opposite direction can be easily proved by means of repeated applications of the 
Monotonicity Lemma \ref{th:monotonicity}.

\ \ \\
(ii) \\
This part can be proved by applying similar arguments. \\
\qed
\end{proof}

\fi

\begin{theorem}[\bf Truncation Lemma]
\label{th:truncation}
Let $U_1$ and $U_2$ be two TA templates with conjunctive guards.
Let $\Phi(1_l)$ be an IMTL formula, with $l \in \{1, 2\}$, then:
$$
\begin{array}{l}
\forall n_1, n_2 \geq 1 . (U_1,U_2)^{(n_1,n_2)} \models Q \Phi(1_2) \textit{ iff } 
(U_1,U_2)^{(n_1^\prime,n_2^\prime)} \models Q \Phi(1_2)
\end{array}
$$
where $Q \in \{ E, E_\textit{inf}, E_\text{inf} \}$, 
$n_1^\prime = \textit{min}(n_1 , c_1)$, $n_2^\prime = \textit{min}(n_2 , c_2)$, 
and:
\begin{itemize}
\item $c_1 = 1$ and $c_2 = 2$, when $Q = E_\textit{inf}$; 
\item $c_1 = c_2 = 1$, when $Q = E_\textit{fin}$; 
\item $c_1 = 2|U_2|$ and $c_2 = 2|U_2| + 1$, when $Q = E$.
\end{itemize}
\end{theorem}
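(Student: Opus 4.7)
The plan is to reduce the Truncation Lemma to iterated applications of the Monotonicity Lemma (Thm.~\ref{th:monotonicity}) and Bounding Lemma (Thm.~\ref{th:bounding}), with the latter's cutoff construction carried out simultaneously on both templates rather than one. The $\Rightarrow$ direction is handled by an explicit construction in the spirit of the Bounding Lemma, while the $\Leftarrow$ direction follows by inflating back via Monotonicity.

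For the $\Rightarrow$ direction, suppose $(U_1,U_2)^{(n_1,n_2)}\models Q\Phi(1_2)$ and let $x$ be a timed computation in $\mathit{tcomp}(\rho)$ for some witnessing s-path $\rho$. I would build a timed computation $y$ of $(U_1,U_2)^{(n_1',n_2')}$ preserving $x(2,1)$ by populating each of the remaining $U_2$-slots and each $U_1$-slot with either an \emph{informative} local computation drawn from $x$ or an \emph{idle} local computation rooted at the relevant initial state. The choice depends on $Q$: for $Q=E_\textit{fin}$ all other slots become idle; for $Q=E_\textit{inf}$ at most one further infinite $U_2$-local computation is retained; for $Q=E$ we preserve, in each template, the set of local states of $x$ that witness the deadlocks, which by the irreflexivity of guards and the pairwise-blocking argument from the Bounding Lemma requires at most $2|U_2|$ copies per template, together with one idle $U_2$-instance to keep every guard fired in $x$ satisfiable. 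Because every conjunctive guard admits the initial state of every non-firing instance as a disjunct (Def.~\ref{def:cg-pnta}), idle fill-ins never falsify a guard used in $x$, so the case analysis from the proof of the Bounding Lemma transfers with only notational change, yielding that $y$ is a legitimate timed computation whose induced s-path satisfies $\Phi(1_2)$.

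For the $\Leftarrow$ direction, I would iterate Monotonicity, adding one idle $U_1$-instance at a time until reaching $n_1$, then one idle $U_2$-instance at a time until reaching $n_2$; each insertion preserves $Q\Phi(1_2)$ since idle instances disable no guard used along the witnessing path (the symmetric version of Monotonicity that adds $U_1$-instances is proved by the same argument as Thm.~\ref{th:monotonicity}, interchanging the roles of the templates).

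The main obstacle is the joint bookkeeping in the $Q=E$ case: one must verify that $c_1=2|U_2|$ instances of $U_1$ together with $c_2=2|U_2|+1$ instances of $U_2$ simultaneously cover all deadlock-inducing configurations along $x$, without the two truncation procedures interfering with each other. The Bounding Lemma's counting argument extends because the deadlock witnesses on either template involve at most $|U_2|$ distinct local states paired with their mutual blockers, and because the idle $U_2$-instance carried over from the Bounding Lemma is enough to re-enable every guard fired by $U_1$-transitions as well. This matches exactly the stated cutoffs.
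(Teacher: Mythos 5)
Your route is genuinely different from the paper's: you propose to redo the Bounding Lemma's computation-surgery directly on the two-parameter system, whereas the paper never re-enters that construction. Its proof collapses the $n_1$ copies of $U_1$ into a single composite template $V_1 = U_1^{n_1}$, so that the system takes the one-parameter shape $(V_1,V_2)^{(1,n_2)}$ to which the Bounding Lemma applies verbatim, truncating $n_2$ to $n_2^\prime$; it then swaps roles, setting $V_1 = U_2^{n_2^\prime}$, $V_2 = U_1$ and substituting the index $1_2$ by $1_1$ in $\Phi$, and applies the Bounding Lemma once more to truncate $n_1$ (the reverse direction of each ``iff'' being Monotonicity, essentially as in your $\Leftarrow$ argument, which is fine, including your remark that a symmetric Monotonicity for $U_1$-instances is needed). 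This two-step reduction is precisely what spares the paper the ``joint bookkeeping'' you yourself identify as the main obstacle.

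That obstacle is where your argument has a genuine gap. For $Q=E$ you claim that the deadlock witnesses ``on either template involve at most $|U_2|$ distinct local states,'' so that $2|U_2|$ informative copies per template suffice. For the $U_1$-slots this is unjustified: a transition of a retained instance may be disabled in $x$ only by $U_1$-instances, and those blocking states range over $S_1$, so the pairwise-blocking count yields up to $2|U_1|$ retained $U_1$-copies, not $2|U_2|$; your construction therefore need not fit into the claimed number of $U_1$-instances when $|U_1|>|U_2|$. (Indeed, when the paper applies the Bounding Lemma with $V_2=U_1$, the bound it actually obtains is $2|V_2| = 2|U_1|$, consistent with the Cutoff Theorem's $c_l = 2|U_l|$; the ``$c_1 = 2|U_2|$'' in the statement is best read as a slip, and your proof attempt inherits it as if it were provable by the $|U_2|$-counting, which it is not.) Moreover the cross-template deadlock preservation --- $U_1$-transitions blocked by $U_2$-states and vice versa, and the single idle $U_2$-instance re-enabling guards fired by instances of both templates --- is asserted to ``transfer with only notational change'' rather than carried out, yet it is exactly the part that has no counterpart in the one-parameter Bounding Lemma. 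To repair your route, redo the counting per template (at most two retained instances per local state of each template, giving $2|U_1|$ and $2|U_2|+1$) and spell out the blocking case analysis for both templates; or, more economically, adopt the paper's collapsing trick, which reduces everything to the already-proved Bounding Lemma.
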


\ifextended
\begin{proof} 
First of all, let us prove that \\
$(U_1,U_2)^{(n_1,n_2)} \models E\Phi(1_2)$ iff 
$(U_1,U_2)^{(n_1,n_2^\prime)} \models E\Phi(1_2)$. \\
If $n_2 \leq c_2$ it is straightforward. If $n_2 > c_2$, let us set 
$V_1 = U_1^{n_1}$ and $V_2 = U_2$. Then: \\
$(U_1,U_2)^{(n_1,n_2)} \models E\Phi(1_2)$ iff \\
$(V_1,V_2)^{(1,n_2)} \models E\Phi(1_2)$ iff \\
$(V_1,V_2)^{(1,n_2^\prime)} \models E\Phi(1_2)$ (by Bounding Lemma \ref{th:bounding}) iff \\
$(U_1,U_2)^{(n_1,n_2^\prime)} \models E\Phi(1_2)$. \\

Now, let us prove the lemma. 
If $n_1 \leq c_1$ it is straightforward.
If $n_1 > c_1$, let us set $V_1 = U_2^{n_2^\prime}$ and $V_2 = U_1$. 
Then: \\
$(U_1,U_2)^{(n_1,n_2^\prime)} \models E\Phi(1_2)$ iff \\
$(U_2,U_1)^{(n_2^\prime,n_1)} \models E\Phi(1_2)$ iff \\
$(V_1,V_2)^{(1,n_1)} \models E\Phi(1_2/1_1)$ (where index $1_2$ of $U_2$ has been substituted by index $1_1$ of $V_1$) iff \\
$(V_1,V_2)^{(1,n_1^\prime)} \models E\Phi(1_2/1_1)$ (by Bounding Lemma \ref{th:bounding}) iff \\
$(U_2,U_1)^{(n_2^\prime,n_1^\prime)} \models E\Phi(1_2)$ iff \\
$(U_1,U_2)^{(n_1^\prime,n_2^\prime)} \models E\Phi(1_2)$. \\
\qed
\end{proof}
\else
The detailed proofs of Thm. \ref{th:bounding} and Thm. \ref{th:truncation} are 
given in the extended version \cite{Spalazzi14}. 
\fi
Thanks to the Truncation Lemma and the duality between operators $A$ and $E$, 
the Conjunctive Cutoff Theorem can be easily proved.
The Cutoff Theorem together with the known decidability and complexity results of 
the model checking problems for various timed temporal logics \cite{Bouyer2009} 
justify the following decidability theorem.

\begin{theorem}[\bf Decidability Theorem]
\label{th:decidability}
\ \ \\
Let $(U_1, \dots, U_k)$ be a set of TA templates with conjunctive guards 
and let $\phi = \bigwedge_{i_{l_1},\dots,i_{l_h}}Q \Phi(i_{l_1},\dots,i_{l_h})$ 
where $Q \in \{ A, A_\textit{inf}, A_\textit{fin}, E, E_\textit{inf}, 
E_\textit{fin} \}$ and $\{ l_1, \dots, l_h \} \in 
[1,k]$. 
The parameterized model checking problem (under the continuous time semantics)
$$
\begin{array}{lc}
\forall (n_1,\dots,n_k) \succeq (1,\dots,1)      . (U_1, \dots, U_k)^{(n_1, \dots,n_k)} \models \phi 
\end{array}
$$
is:
\begin{itemize}
\item \textsc{undecidable} 
      when $\Phi$ is an IMTL formula;
\item \textsc{decidable} and \textsc{2-expspace}
      when $\Phi$ is an IMITL formula;
\item \textsc{decidable} and \textsc{expspace}
      when $\phi$ is a TCTL formula.
\end{itemize}
\end{theorem}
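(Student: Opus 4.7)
The plan is to reduce the parameterized problem to a finite collection of model-checking problems over a single (flattened) Timed Automaton via the Conjunctive Cutoff Theorem, and then to invoke known decidability and complexity results for MTL, MITL, and TCTL over a single Timed Automaton.

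First I would apply Theorem~\ref{the:conj-cutoff} to replace the quantification over all $(n_1,\dots,n_k) \succeq (1,\dots,1)$ by the finite check $(U_1,\dots,U_k)^{(d_1,\dots,d_k)} \models \phi$ for every $(d_1,\dots,d_k) \preceq (c_1,\dots,c_k)$; since each $c_l$ is bounded by $2|U_l|+1$, the number of residual instances is polynomial in $\sum_l |U_l|$. Each residual PNTA is a finite network of Timed Automata and can be flattened into a single Timed Automaton by the standard synchronous product construction, yielding an automaton whose number of locations is at most exponential in the total template size and whose clock set is of polynomial size.

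At this point I would invoke the classical single-automaton model-checking results. MTL model checking on Timed Automata is undecidable under the continuous-time semantics (Ouaknine--Worrell), so the IMTL case of the PMCP is immediately undecidable: it suffices to exhibit the trivial reduction in the opposite direction, setting $k=1$ and $n_1=1$, to embed any MTL instance into the parameterized problem. MITL model checking is EXPSPACE-complete in the automaton size, which composed with the exponential blow-up of the flattening yields 2-EXPSPACE for the IMITL case. TCTL model checking of a single Timed Automaton is PSPACE-complete in its size, which composed with the exponential blow-up yields EXPSPACE for the TCTL case.

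The main technical obstacle will be reconciling the shape constraints of the Cutoff Theorem with the logics in question. IMITL is a syntactic restriction of IMTL, so Theorem~\ref{the:conj-cutoff} directly applies and the only care needed is a careful accounting of the complexity. TCTL, however, admits nested path quantifiers that do not fit the prenex form $\bigwedge_{i_l} Q\, \Phi$, so one must argue that the Bounding and Truncation Lemmas still preserve truth values of TCTL sub-formulae at each retained configuration; I expect this to follow from the fact that the conjunctive-guard stuttering extension used in the proofs of those lemmas preserves the continuous-time computation tree restricted to the retained instances, and hence the truth of any TCTL state formula evaluated on them. A secondary subtlety is the treatment of the indexed conjunction for sub-formulae referring only to local propositions of a single template: the cutoff values collapse to the bounds stated in Theorem~\ref{the:conj-cutoff}, leaving the complexity counts above unchanged.
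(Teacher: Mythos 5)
Your proposal follows essentially the same route as the paper: apply the Conjunctive Cutoff Theorem to reduce the parameterized problem to finitely many ordinary model-checking instances over a product automaton with at most exponentially many states, then invoke the known results that MTL is undecidable, MITL is \textsc{expspace}-complete (giving \textsc{2-expspace} after the blow-up), and TCTL is \textsc{pspace}-complete (giving \textsc{expspace}). Your extra remarks --- phrasing the undecidability as an embedding of an MTL instance and flagging that nested TCTL quantifiers do not literally fit the prenex shape of the cutoff theorem --- are points the paper's own proof glosses over, and your treatment is at least as careful as the published argument.
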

\begin{proof}
\ifextended
For the first two results, consider that the Cutoff Theorem reduces the 
parameterized model checking problem to an ordinary model checking problem.
The latter is undecidable for MTL and is decidable and \textsc{expspace}-Complete 
(i.e. $\textsc{DSPACE}(2^{O(n)})$) for MITL \cite{Bouyer2009}.
Since the model in the parameterized model checking problem has at most 
an exponential number of states (i.e. $n = O(k \cdot |U|^{|U|}$), where $|U| = 
max(|U_1|,\dots,|U_k|$)), and it is invoked at most 
$\Theta(|U|^k)$ times, then the following is an upper 
bound on the complexity of the parameterized model checking problem: 
$\Theta(|U|^k) \cdot 2^{O(2^{|U| log(|U|)})}$, thus the problem is \textsc{2-expspace}.
Concerning the third statement, an ordinary model checking problem for TCTL 
is decidable and \textsc{pspace}-Complete (i.e. $O(n^p)$, for some $p$) 
\cite{Bouyer2009}. The parameterized model checking problem invokes
$\Theta(|U|^k)$ times a TCTL model checking problem, whose state space is at most
exponential, thus the complexity of the former is
$\Theta(|U|^k) \cdot O(k^p \cdot 2^{p |U| log(|U|)})$, i.e. at most 
\textsc{expspace}.
\else
For the first two results, consider that the Cutoff Theorem reduces the 
parameterized model checking problem to an ordinary model checking problem.
The latter is undecidable for MTL and is decidable and \textsc{expspace}-Complete 
(i.e. $\textsc{DSPACE}(2^{O(n)})$, for MITL \cite{Bouyer2009}. Since the model
has an exponential number of states (i.e. $n = 2^{|U| log(|U|)}$, where $U$ is the 
``biggest'' template), the problem is
at most $\textsc{2-EXPSPACE}$.
Concerning the third statement, the TCTL model checking problem is 
\textsc{pspace}-Complete \cite{Bouyer2009}. Again, since the 
model has an exponential number of states,
the parameterized model checking problem is at most $\textsc{expspace}$.
A more detailed proof can be found in the extended version \cite{Spalazzi14}.
\fi
\end{proof}

\section{Case Study}
\label{sec:case-study}

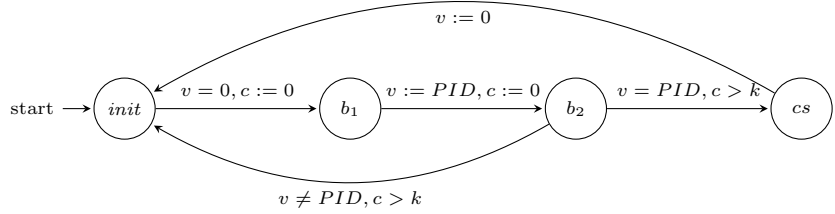
\begin{figure*}[t]
\center
\begin{tikzpicture}[->,>=stealth,shorten >=1pt,auto,node distance=3cm,font=\scriptsize]


    \node[initial,state] (qi)  {\emph{init}};
    \node[state]         (b1)  [right of=qi] {$b_1$};
    \node[state]         (b2)  [right of=b1] {$b_2$};
    \node[state]         (cs)  [right of=b2] {$cs$};

    \path (qi)  edge node {$v = 0, c := 0$} (b1);
    \path (b1)  edge node {$v := PID, c:= 0$} (b2);
    \path (b2)  edge node {$v = PID, c > k$} (cs)
                edge [bend left] node {$v \neq PID, c > k$} (qi);
    \path (cs)  edge [bend right] node {$v := 0$} (qi);

\end{tikzpicture}
\caption{\label{fig:proc_fischer} Process in Fischer's protocol as a Timed Automaton with integer variables}
\end{figure*}

We use the Fischer's protocol for mutual exclusion to show how to model-check a 
parameterized and timed systems. The protocol uses a single timed automaton 
template, instantiated an arbitrary number of time. Fig. \ref{fig:proc_fischer} 
depicts such template, where $\textit{inv}(b_1) = (c \leq k)$ \cite{Carioni2010}.
In Fischer's protocol every process (a) reads and writes a PID from and into a 
shared variable, and (b) waits a constant amount of time between when it asks to 
enter the critical section, and when it actually does so.
The Fischer's protocol cannot be directly modeled in our framework because of the
shared variable. We will first abstract the variable into a finite state system 
with conjunctive guards, and subsequently we will present the results of our 
verification. \\
\textbf{Abstracting Process Identifier.}
A variable can be modeled naively as an automaton with the structure of 
a completely connected graph, whose vertices denote possible assigned values
(let us call $V$ such model).
The state space can thus be infinite or finite, but even in the latter case it is
usually too big and makes the verification task unfeasible.

\begin{figure}[t]
\begin{minipage}[t][3cm][t]{0.45\textwidth}
\centering
\begin{tikzpicture}[->,node distance=2cm]

  \node[initial,state] (s0)  {$s_0$};
  \node[state]         (s1)  [above of=s0] {$s_1$};
  \node[state]         (s2)  [above right of=s0] {$s_2$};
  \node[state]         (s3)  [right of=s0,draw=none] {$\cdots$};

    \path (s0)  edge (s1)
                edge (s2)
                edge (s3);
    \path (s1)  edge (s2)
                edge [bend left] (s0)
                edge [bend left=75] (s3);
    \path (s2)  edge [bend left] (s0)
                edge [bend right] (s1)
                edge (s3);
    \path (s3)  edge [bend left] (s0)
                edge [bend right] (s2)
                edge [bend right=100] (s1);

\end{tikzpicture}
\caption{V: a shared variable}
\end{minipage} ~
\begin{minipage}[t][3cm][t]{0.45\textwidth}
\centering
\begin{tikzpicture}[->,node distance=2cm]

  \node[initial,state]         (np)  {\emph{diffpid}};
  \node[state]         (mp)  [right of=np]   {\emph{mypid}};

    \path (np)  edge [bend left] (mp)
                edge [loop above] (np);
    \path (mp)  edge [loop above] (mp)
                edge [bend left] (np);

\end{tikzpicture}
\caption{W: a process-centric view of a shared PID variable}

\end{minipage}
\end{figure}

An abstract shared variable for PIDs can be defined, under the assumptions:
\begin{itemize}
\item the variable only stores PID values;
\item the variable is shared among all processes;
\item every PID value overwrites the previous values of the variable itself;
\item every process can compare the variable value only with its own
    PID value.
\end{itemize}
\vspace{-0.2cm}
As in a predicate abstraction, we replace the shared variable with its
\emph{process-centric view}. The latter has only two relevant states: it is 
either the same PID as the process, or it stores a different one.
We use $W$ to denote such process.
Every process $P$ is in a one-to-one relation with its own view of the variable. 
We introduce a process template $P^\prime = P \times W$ that results from the
synchronous product of the $P$ and $W$. We could then model check a system 
${P^\prime}^{(n)}$. Doing this, we would probably obtain many spurious 
counter-examples, since two processes could have their copy of $W$ in state 
\texttt{*\_\_Mypid}. Since no variable can store multiple values, this is 
impossible. Conjunctive guards, though, allow to constraint
the system in such a way that no two processes can be in a state of the
\texttt{*\_\_Mypid} group. This solution rules out the undesired spurious behaviors,
and is very convenient since it can be applied whenever an algorithm uses a shared 
variable. We thus define $P^{\prime\prime}$ to be the refined version of $P^\prime$
represented in Fig. \ref{fig:fischer_uppaal} using the Uppaal notation.
It is possible to show that the abstract system simulates the concrete system, 
namely $(P \times V)^{(1,n)} \preceq (P \times W)^{(1,n)}$, for any positive $n$.

\begin{figure}[t]

\begin{minipage}[t][2.5cm][t]{0.45\linewidth}
\centering\includegraphics[width=\linewidth]{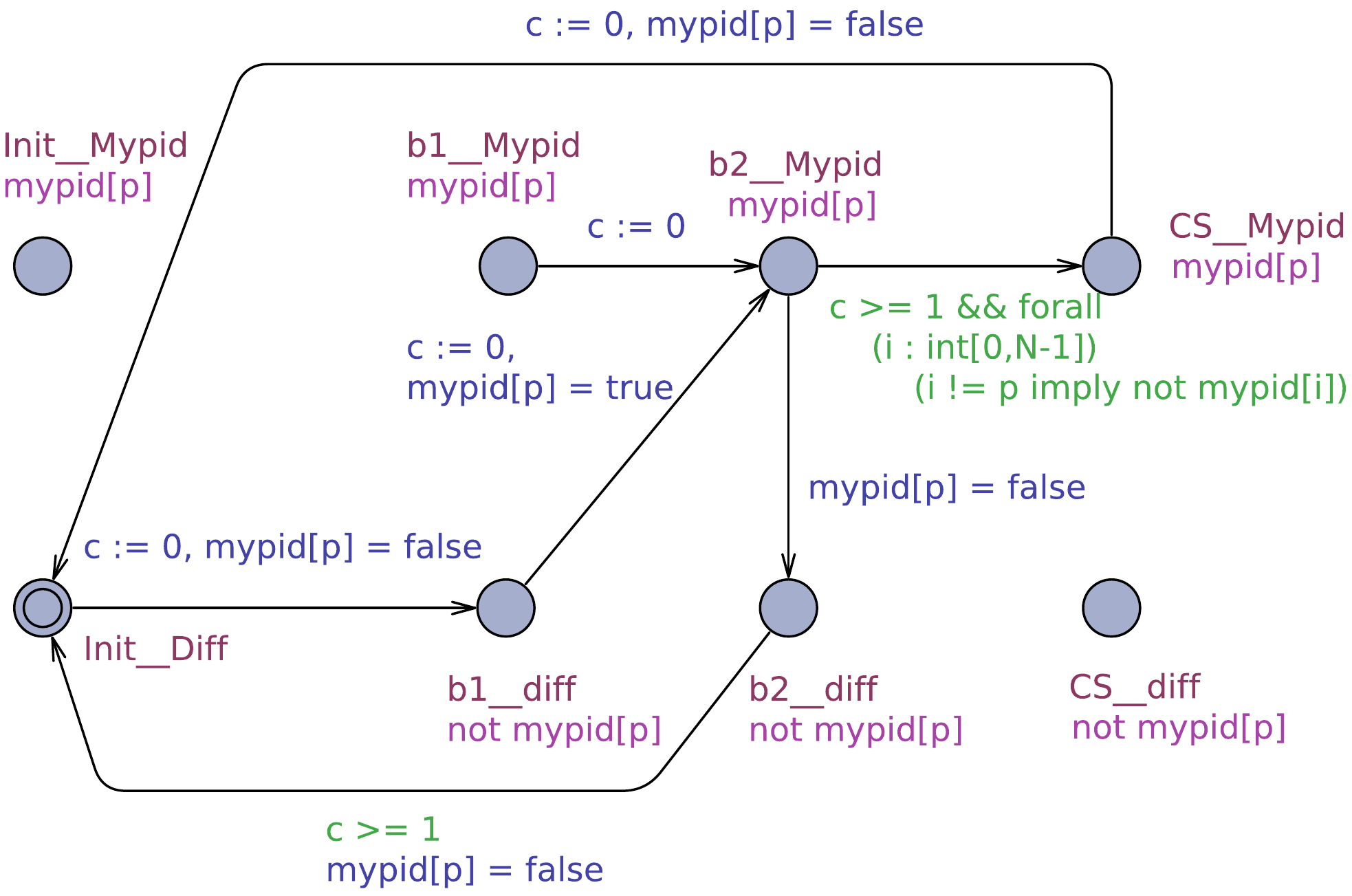}
\caption{\label{fig:fischer_uppaal}$P^{\prime\prime} = (P \times W) + CG$ template}
\end{minipage}
~
\begin{minipage}[t][2.5cm][t]{0.5\linewidth}
\centering\includegraphics[width=\linewidth]{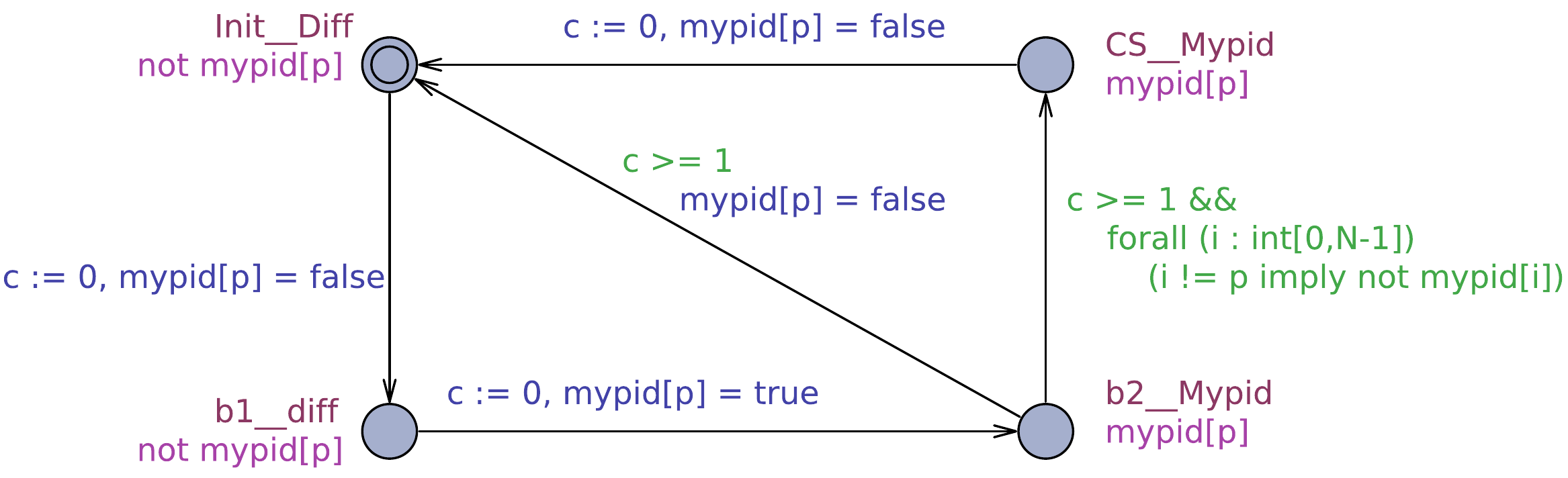}
\caption{\label{fig:fischer_uppaal_reduced}Reduced $P^{\prime\prime}$}
\end{minipage}

\end{figure}

Fig. \ref{fig:fischer_uppaal} depicts template $P^{\prime\prime}$. Some of the 
eight states
resulting from the product are not reached by any transition, and can thus be
removed from the model, implying a smaller cutoff. The model
manipulation up to this point can be completely automatized. We notice that it is 
safe to remove state \verb+b2__diff+ and connect directly 
state \verb+b2__Mypid+ with \verb+Init__Diff+, obtaining the reduced 
system in Fig. \ref{fig:fischer_uppaal_reduced}.
Finally, let us remark that variable \verb+mypid+ in Figg. \ref{fig:fischer_uppaal} and \ref{fig:fischer_uppaal_reduced} is added to overcome Uppaal 
syntax limitations that cannot refer directly to process states in guards and 
specifications. The reduced system has $4$ states, and thus the cutoff is $9$. \\
\\
\textbf{Verification Results.} 
Below are the formulae that have been model checked, together with the required
time and memory. \footnote{The experiments were run on an Intel Core2 Duo CPU 
T5870 @ 2.0 Ghz with 4GB RAM, OS Linux 3.13-1-amd64} \\
\begin{tabular}{llccc}
~ & \textit{Formula} & \textit{Outcome} & \textit{Time (s)} & \textit{Mem. (MB)} \\
(1) & $\bigwedge_i E \future_{\geq 0} (\textit{CS\_mypid}(i))$ & \textit{true} & 0.01 & 155.2 \\
(2) & $\bigwedge_{i \neq j} A \glob_{\geq 0} ! (\textit{CS\_mypid}(i) \wedge \textit{CS\_mypid}(j))$ & \textit{true} & 30.1 & 155.2 \\
(3) & $\bigwedge_i A \future_{\geq 0} (\textit{CS\_mypid}(i))$ & \textit{false} & 0.59 & 155.2 \\
\end{tabular} \vspace{0.2cm} \\
Formula (1) checks that a process can enter its critical section, while (2) 
checks the actual mutual exclusion property. Finally (3) states that a process will always 
be able to enter its critical section.
It is well known that while the Fischer's protocol ensures the
mutual exclusion property (i.e. formulae (1) and (2)), it also suffers from the 
problem of processes to possibly starve (i.e. formula (3)).

\section{Conclusions}

In this work we presented the combined study of timed and parameterized systems. 
We proved that a cutoff exists for PNTA with conjunctive guards and a subset of 
ITCTL$^\star$ formulae. Moreover, the cutoff value is equal to the value  
computed in Emerson and Kahlon's work for untimed systems\cite{Emerson2000}.
This proves that the parameterized model checking problem 
is decidable for networks of timed automata with disjunctive guards, for a 
suitable logic. We remark that for timed systems, applying 
Thm. \ref{the:conj-cutoff} one obtains a considerably smaller cutoff than 
applying the  (untimed) Emerson and Kahlon's cutoff theorem after reducing the 
original timed system to a finite state system by means of the traditional region 
or zone abstractions.

Finally, we used the Fischer's protocol for mutual exclusion as a benchmark for 
showing how to apply the cutoff theorem. We claim that the use of conjunctive 
guards is convenient
for verifying systems based of shared variables, since they naturally express the
constraint that a variable can store only one value at any time.
As a follow-up of this work, we aim at two main goals: (a) finding more algorithms
for real-time and distributed systems that can be model checked using our 
framework, and (b) extending the Emerson and Kahlon cutoff theorem also to PNTA 
with Disjunctive Guards.

\bibliographystyle{plain}
\bibliography{biblio}

\end{document}